\newtheorem{theorem}{Theorem}
\newtheorem{acknowledgement}[theorem]{Acknowledgement}
\newtheorem{assumption}[theorem]{Assumption}
\newtheorem{axiom}[theorem]{Axiom}
\newtheorem{condition}[theorem]{Condition}
\newtheorem{conjecture}[theorem]{Conjecture}
\newtheorem{corollary}[theorem]{Corollary}
\newtheorem{definition}[theorem]{Definition}
\newtheorem{example}[theorem]{Example}
\newtheorem{exercise}[theorem]{Exercise}
\newtheorem{lemma}[theorem]{Lemma}
\newtheorem{proposition}[theorem]{Proposition}
\newtheorem{remark}[theorem]{Remark}
\newenvironment{proof}[1][Proof]{\noindent\textbf{#1.} }{\ \rule{0.5em}{0.5em}}
\let\pdfoutput=\undefined\fi
\chardef\@x10\chardef\@xv60
\def\tcitime{
\def\@time{%
  \@minute\time\@hour\@minute\divide\@hour\@xv
  \ifnum\@hour<\@x 0\fi\the\@hour:%
  \multiply\@hour\@xv\advance\@minute-\@hour
  \ifnum\@minute<\@x 0\fi\the\@minute
  }}%
\def\x@hyperref#1#2#3{%
   \catcode`\~ = 12
   \catcode`\$ = 12
   \catcode`\_ = 12
   \catcode`\# = 12
   \catcode`\& = 12
   \catcode`\% = 12
   \y@hyperref{#1}{#2}{#3}%
}
\def\y@hyperref#1#2#3#4{%
   #2\ref{#4}#3
   \catcode`\~ = 13
   \catcode`\$ = 3
   \catcode`\_ = 8
   \catcode`\# = 6
   \catcode`\& = 4
   \catcode`\% = 14
}
\def\QCTOpt[#1]#2{%
  \def\QCTOptB{#1}
  \def\QCTOptA{#2}
}
\def\QCTNOpt#1{%
  \def\QCTOptA{#1}
  \let\QCTOptB\empty
}
\def\Qct{%
  \@ifnextchar[{%
    \QCTOpt}{\QCTNOpt}
}
\def\QCBOpt[#1]#2{%
  \def\QCBOptB{#1}%
  \def\QCBOptA{#2}%
}
\def\QCBNOpt#1{%
  \def\QCBOptA{#1}%
  \let\QCBOptB\empty
}
\def\Qcb{%
  \@ifnextchar[{%
    \QCBOpt}{\QCBNOpt}%
}
\def\PrepCapArgs{%
  \ifx\QCBOptA\empty
    \ifx\QCTOptA\empty
      {}%
    \else
      \ifx\QCTOptB\empty
        {\QCTOptA}%
      \else
        [\QCTOptB]{\QCTOptA}%
      \fi
    \fi
  \else
    \ifx\QCBOptA\empty
      {}%
    \else
      \ifx\QCBOptB\empty
        {\QCBOptA}%
      \else
        [\QCBOptB]{\QCBOptA}%
      \fi
    \fi
  \fi
}
\def\GRAPHICSPS#1{%
 \ifcase\GRAPHICSTYPE
   \special{ps: #1}%
 \or
   \special{language "PS", include "#1"}%
 \fi
}%
\def\graffile#1#2#3#4{%
    \bgroup
	   \@inlabelfalse
       \leavevmode
       \@ifundefined{bbl@deactivate}{\def~{\string~}}{\activesoff}%
        \raise -#4 \BOXTHEFRAME{%
           \hbox to #2{\raise #3\hbox to #2{\null #1\hfil}}}%
    \egroup
}%
\def\draftbox#1#2#3#4{%
 \leavevmode\raise -#4 \hbox{%
  \frame{\rlap{\protect\tiny #1}\hbox to #2%
   {\vrule height#3 width\z@ depth\z@\hfil}%
  }%
 }%
}%
\let\nographics=\@msidraft
\newif\ifwasdraft
\def\GRAPHIC#1#2#3#4#5{%
   \ifnum\@msidraft=\@ne\draftbox{#2}{#3}{#4}{#5}%
   \else\graffile{#1}{#3}{#4}{#5}%
   \fi
}
\def\addtoLaTeXparams#1{%
    \edef\LaTeXparams{\LaTeXparams #1}}%
\newif\ifBoxFrame \BoxFramefalse
\newif\ifOverFrame \OverFramefalse
\newif\ifUnderFrame \UnderFramefalse
\def\BOXTHEFRAME#1{%
   \hbox{%
      \ifBoxFrame
         \frame{#1}%
      \else
         {#1}%
      \fi
   }%
}
\def\doFRAMEparams#1{\BoxFramefalse\OverFramefalse\UnderFramefalse\readFRAMEparams#1\end}%
\def\readFRAMEparams#1{%
 \ifx#1\end%
  \let\next=\relax
  \else
  \ifx#1i\dispkind=\z@\fi
  \ifx#1d\dispkind=\@ne\fi
  \ifx#1f\dispkind=\tw@\fi
  \ifx#1t\addtoLaTeXparams{t}\fi
  \ifx#1b\addtoLaTeXparams{b}\fi
  \ifx#1p\addtoLaTeXparams{p}\fi
  \ifx#1h\addtoLaTeXparams{h}\fi
  \ifx#1X\BoxFrametrue\fi
  \ifx#1O\OverFrametrue\fi
  \ifx#1U\UnderFrametrue\fi
  \ifx#1w
    \ifnum\@msidraft=1\wasdrafttrue\else\wasdraftfalse\fi
    \@msidraft=\@ne
  \fi
  \let\next=\readFRAMEparams
  \fi
 \next
 }%
\def\IFRAME#1#2#3#4#5#6{%
      \bgroup
      \let\QCTOptA\empty
      \let\QCTOptB\empty
      \let\QCBOptA\empty
      \let\QCBOptB\empty
      #6%
      \parindent=0pt
      \leftskip=0pt
      \rightskip=0pt
      \setbox0=\hbox{\QCBOptA}%
      \@tempdima=#1\relax
      \ifOverFrame
          \typeout{This is not implemented yet}%
          \show\HELP
      \else
         \ifdim\wd0>\@tempdima
            \advance\@tempdima by \@tempdima
            \ifdim\wd0 >\@tempdima
               \setbox1 =\vbox{%
                  \unskip\hbox to \@tempdima{\hfill\GRAPHIC{#5}{#4}{#1}{#2}{#3}\hfill}%
                  \unskip\hbox to \@tempdima{\parbox[b]{\@tempdima}{\QCBOptA}}%
               }%
               \wd1=\@tempdima
            \else
               \textwidth=\wd0
               \setbox1 =\vbox{%
                 \noindent\hbox to \wd0{\hfill\GRAPHIC{#5}{#4}{#1}{#2}{#3}\hfill}\\%
                 \noindent\hbox{\QCBOptA}%
               }%
               \wd1=\wd0
            \fi
         \else
            \ifdim\wd0>0pt
              \hsize=\@tempdima
              \setbox1=\vbox{%
                \unskip\GRAPHIC{#5}{#4}{#1}{#2}{0pt}%
                \break
                \unskip\hbox to \@tempdima{\hfill \QCBOptA\hfill}%
              }%
              \wd1=\@tempdima
           \else
              \hsize=\@tempdima
              \setbox1=\vbox{%
                \unskip\GRAPHIC{#5}{#4}{#1}{#2}{0pt}%
              }%
              \wd1=\@tempdima
           \fi
         \fi
         \@tempdimb=\ht1
         \advance\@tempdimb by -#2
         \advance\@tempdimb by #3
         \leavevmode
         \raise -\@tempdimb \hbox{\box1}%
      \fi
      \egroup%
}%
\def\DFRAME#1#2#3#4#5{%
  \vspace\topsep
  \hfil\break
  \bgroup
     \leftskip\@flushglue
	 \rightskip\@flushglue
	 \parindent\z@
	 \parfillskip\z@skip
     \let\QCTOptA\empty
     \let\QCTOptB\empty
     \let\QCBOptA\empty
     \let\QCBOptB\empty
	 \vbox\bgroup
        \ifOverFrame 
           #5\QCTOptA\par
        \fi
        \GRAPHIC{#4}{#3}{#1}{#2}{\z@}%
        \ifUnderFrame 
           \break#5\QCBOptA
        \fi
	 \egroup
  \egroup
  \vspace\topsep
  \break
}%
\def\FFRAME#1#2#3#4#5#6#7{%
  \@ifundefined{floatstyle}
    {
     \begin{figure}[#1]%
    }
    {
	 \ifx#1h
      \begin{figure}[H]%
	 \else
      \begin{figure}[#1]%
	 \fi
	}
  \let\QCTOptA\empty
  \let\QCTOptB\empty
  \let\QCBOptA\empty
  \let\QCBOptB\empty
  \ifOverFrame
    #4
    \ifx\QCTOptA\empty
    \else
      \ifx\QCTOptB\empty
        \caption{\QCTOptA}%
      \else
        \caption[\QCTOptB]{\QCTOptA}%
      \fi
    \fi
    \ifUnderFrame\else
      \label{#5}%
    \fi
  \else
    \UnderFrametrue%
  \fi
  \begin{center}\GRAPHIC{#7}{#6}{#2}{#3}{\z@}\end{center}%
  \ifUnderFrame
    #4
    \ifx\QCBOptA\empty
      \caption{}%
    \else
      \ifx\QCBOptB\empty
        \caption{\QCBOptA}%
      \else
        \caption[\QCBOptB]{\QCBOptA}%
      \fi
    \fi
    \label{#5}%
  \fi
  \end{figure}%
 }%
\def\makeactives{
  \catcode`\"=\active
  \catcode`\;=\active
  \catcode`\:=\active
  \catcode`\'=\active
  \catcode`\~=\active
}
   \gdef\activesoff{%
      \def"{\string"}%
      \def;{\string;}%
      \def:{\string:}%
      \def'{\string'}%
      \def~{\string~}%
    }
\def\FRAME#1#2#3#4#5#6#7#8{%
 \bgroup
 \ifnum\@msidraft=\@ne
   \wasdrafttrue
 \else
   \wasdraftfalse%
 \fi
 \def\LaTeXparams{}%
 \dispkind=\z@
 \def\LaTeXparams{}%
 \doFRAMEparams{#1}%
 \ifnum\dispkind=\z@\IFRAME{#2}{#3}{#4}{#7}{#8}{#5}\else
  \ifnum\dispkind=\@ne\DFRAME{#2}{#3}{#7}{#8}{#5}\else
   \ifnum\dispkind=\tw@
    \edef\@tempa{\noexpand\FFRAME{\LaTeXparams}}%
    \@tempa{#2}{#3}{#5}{#6}{#7}{#8}%
    \fi
   \fi
  \fi
  \ifwasdraft\@msidraft=1\else\@msidraft=0\fi{}%
  \egroup
 }%
\def\TEXUX#1{"texux"}
\def\limfunc#1{\mathop{\rm #1}}%
\def\func#1{\mathop{\rm #1}\nolimits}%
\long\def\QQQ#1#2{%
     \long\expandafter\def\csname#1\endcsname{#2}}%
\long\def\QQA#1#2{}%
\def\QTR#1#2{{\csname#1\endcsname {#2}}}%
\def\EXPAND#1[#2]#3{}%
\def\NOEXPAND#1[#2]#3{}%
\def\LaTeXparent#1{}%
\def\ChildStyles#1{}%
\def\ChildDefaults#1{}%
\def\QTagDef#1#2#3{}%
  \providecommand{\UNICODE}[2][]{\protect\rule{.1in}{.1in}}
  \providecommand{\U}[1]{\protect\rule{.1in}{.1in}}
\def\QQfnmark#1{\footnotemark}
 \def\abstract{%
  \if@twocolumn
   \section*{Abstract (Not appropriate in this style!)}%
   \else \small 
   \begin{center}{\bf Abstract\vspace{-.5em}\vspace{\z@}}\end{center}%
   \quotation 
   \fi
  }%
   \def\registered{\relax\ifmmode{}\r@gistered
                    \else$\m@th\r@gistered$\fi}%
 \def\r@gistered{^{\ooalign
  {\hfil\raise.07ex\hbox{$\scriptstyle\rm\text{R}$}\hfil\crcr
  \mathhexbox20D}}}}{}%
\newdimen\theight
\def\newfmtname{LaTeX2e}
  \DeclareOldFontCommand{\rm}{\normalfont\rmfamily}{\mathrm}
  \DeclareOldFontCommand{\sf}{\normalfont\sffamily}{\mathsf}
  \DeclareOldFontCommand{\tt}{\normalfont\ttfamily}{\mathtt}
  \DeclareOldFontCommand{\bf}{\normalfont\bfseries}{\mathbf}
  \DeclareOldFontCommand{\it}{\normalfont\itshape}{\mathit}
  \DeclareOldFontCommand{\sl}{\normalfont\slshape}{\@nomath\sl}
  \DeclareOldFontCommand{\sc}{\normalfont\scshape}{\@nomath\sc}
\def\alpha{{\Greekmath 010B}}%
\def\beta{{\Greekmath 010C}}%
\def\gamma{{\Greekmath 010D}}%
\def\delta{{\Greekmath 010E}}%
\def\epsilon{{\Greekmath 010F}}%
\def\zeta{{\Greekmath 0110}}%
\def\eta{{\Greekmath 0111}}%
\def\theta{{\Greekmath 0112}}%
\def\iota{{\Greekmath 0113}}%
\def\kappa{{\Greekmath 0114}}%
\def\lambda{{\Greekmath 0115}}%
\def\mu{{\Greekmath 0116}}%
\def\nu{{\Greekmath 0117}}%
\def\xi{{\Greekmath 0118}}%
\def\pi{{\Greekmath 0119}}%
\def\rho{{\Greekmath 011A}}%
\def\sigma{{\Greekmath 011B}}%
\def\tau{{\Greekmath 011C}}%
\def\upsilon{{\Greekmath 011D}}%
\def\phi{{\Greekmath 011E}}%
\def\chi{{\Greekmath 011F}}%
\def\psi{{\Greekmath 0120}}%
\def\omega{{\Greekmath 0121}}%
\def\varepsilon{{\Greekmath 0122}}%
\def\vartheta{{\Greekmath 0123}}%
\def\varpi{{\Greekmath 0124}}%
\def\varrho{{\Greekmath 0125}}%
\def\varsigma{{\Greekmath 0126}}%
\def\varphi{{\Greekmath 0127}}%
\def\nabla{{\Greekmath 0272}}
\def\FindBoldGroup{%
   {\setbox0=\hbox{$\mathbf{x\global\edef\theboldgroup{\the\mathgroup}}$}}%
}
\def\Greekmath#1#2#3#4{%
    \if@compatibility
        \ifnum\mathgroup=\symbold
           \mathchoice{\mbox{\boldmath$\displaystyle\mathchar"#1#2#3#4$}}%
                      {\mbox{\boldmath$\textstyle\mathchar"#1#2#3#4$}}%
                      {\mbox{\boldmath$\scriptstyle\mathchar"#1#2#3#4$}}%
                      {\mbox{\boldmath$\scriptscriptstyle\mathchar"#1#2#3#4$}}%
        \else
           \mathchar"#1#2#3#4%
        \fi 
    \else 
        \FindBoldGroup
        \ifnum\mathgroup=\theboldgroup 
           \mathchoice{\mbox{\boldmath$\displaystyle\mathchar"#1#2#3#4$}}%
                      {\mbox{\boldmath$\textstyle\mathchar"#1#2#3#4$}}%
                      {\mbox{\boldmath$\scriptstyle\mathchar"#1#2#3#4$}}%
                      {\mbox{\boldmath$\scriptscriptstyle\mathchar"#1#2#3#4$}}%
        \else
           \mathchar"#1#2#3#4%
        \fi     	    
	  \fi}
\newif\ifGreekBold  \GreekBoldfalse
\let\SAVEPBF=\pbf
\def\pbf{\GreekBoldtrue\SAVEPBF}%
  \newcounter{equationnumber}  
  \def\mathletters{%
     \addtocounter{equation}{1}
     \edef\@currentlabel{\theequation}%
     \setcounter{equationnumber}{\c@equation}
     \setcounter{equation}{0}%
     \edef\theequation{\@currentlabel\noexpand\alph{equation}}%
  }
    \def\BibTeX{{\rm B\kern-.05em{\sc i\kern-.025em b}\kern-.08em
                 T\kern-.1667em\lower.7ex\hbox{E}\kern-.125emX}}}{}%
\def\AmS{{\protect\usefont{OMS}{cmsy}{m}{n}%
                A\kern-.1667em\lower.5ex\hbox{M}\kern-.125emS}}}{}%
\def\@@eqncr{\let\@tempa\relax
    \ifcase\@eqcnt \def\@tempa{& & &}\or \def\@tempa{& &}%
      \else \def\@tempa{&}\fi
     \@tempa
     \if@eqnsw
        \iftag@
           \@taggnum
        \else
           \@eqnnum\stepcounter{equation}%
        \fi
     \fi
     \global\tag@false
     \global\@eqnswtrue
     \global\@eqcnt\z@\cr}
\def\TCItag{\@ifnextchar*{\@TCItagstar}{\@TCItag}}
\def\@TCItag#1{%
    \global\tag@true
    \global\def\@taggnum{(#1)}%
    \global\def\@currentlabel{#1}}
\def\@TCItagstar*#1{%
    \global\tag@true
    \global\def\@taggnum{#1}%
    \global\def\@currentlabel{#1}}
\def\tint{\msi@int\textstyle\int}%
\def\tiint{\msi@int\textstyle\iint}%
\def\tiiint{\msi@int\textstyle\iiint}%
\def\tiiiint{\msi@int\textstyle\iiiint}%
\def\tidotsint{\msi@int\textstyle\idotsint}%
\def\toint{\msi@int\textstyle\oint}%
\newtoks\temptoksa
\newtoks\temptoksb
\newtoks\temptoksc
\def\msi@int#1#2{%
 \def\@temp{{#1#2\the\temptoksc_{\the\temptoksa}^{\the\temptoksb}}}%
 \futurelet\@nextcs
 \@int
}
\def\@int{%
   \ifx\@nextcs\limits
      \typeout{Found limits}%
      \temptoksc={\limits}%
	  \let\@next\@intgobble%
   \else\ifx\@nextcs\nolimits
      \typeout{Found nolimits}%
      \temptoksc={\nolimits}%
	  \let\@next\@intgobble%
   \else
      \typeout{Did not find limits or no limits}%
      \temptoksc={}%
      \let\@next\msi@limits%
   \fi\fi
   \@next   
}%
\def\@intgobble#1{%
   \typeout{arg is #1}%
   \msi@limits
}
\def\msi@limits{%
   \temptoksa={}%
   \temptoksb={}%
   \@ifnextchar_{\@limitsa}{\@limitsb}%
}
\def\@limitsa_#1{%
   \temptoksa={#1}%
   \@ifnextchar^{\@limitsc}{\@temp}%
}
\def\@limitsb{%
   \@ifnextchar^{\@limitsc}{\@temp}%
}
\def\@limitsc^#1{%
   \temptoksb={#1}%
   \@ifnextchar_{\@limitsd}{\@temp}%
}
\def\@limitsd_#1{%
   \temptoksa={#1}%
   \@temp
}
\def\dint{\msi@int\displaystyle\int}%
\def\diint{\msi@int\displaystyle\iint}%
\def\diiint{\msi@int\displaystyle\iiint}%
\def\diiiint{\msi@int\displaystyle\iiiint}%
\def\didotsint{\msi@int\displaystyle\idotsint}%
\def\doint{\msi@int\displaystyle\oint}%
\def\ExitTCILatex{\makeatother }
\if@compatibility\message{amsmath already loaded}\fi\aftergroup\ExitTCILatex}
\if@compatibility\message{amstex already loaded}\fi\aftergroup\ExitTCILatex}
\if@compatibility\message{amsgen already loaded}\fi\aftergroup\ExitTCILatex}
\let\DOTSI\relax
\def\RIfM@{\relax\ifmmode}%
\def\FN@{\futurelet\next}%
\def\iint{\DOTSI\intno@\tw@\FN@\ints@}%
\def\iiint{\DOTSI\intno@\thr@@\FN@\ints@}%
\def\iiiint{\DOTSI\intno@4 \FN@\ints@}%
\def\idotsint{\DOTSI\intno@\z@\FN@\ints@}%
\def\ints@{\findlimits@\ints@@}%
\newif\iflimtoken@
\newif\iflimits@
\def\findlimits@{\limtoken@true\ifx\next\limits\limits@true
 \else\ifx\next\nolimits\limits@false\else
 \limtoken@false\ifx\ilimits@\nolimits\limits@false\else
 \ifinner\limits@false\else\limits@true\fi\fi\fi\fi}%
\def\multint@{\int\ifnum\intno@=\z@\intdots@                          
 \else\intkern@\fi                                                    
 \ifnum\intno@>\tw@\int\intkern@\fi                                   
 \ifnum\intno@>\thr@@\int\intkern@\fi                                 
 \int}
\def\multintlimits@{\intop\ifnum\intno@=\z@\intdots@\else\intkern@\fi
 \ifnum\intno@>\tw@\intop\intkern@\fi
 \ifnum\intno@>\thr@@\intop\intkern@\fi\intop}%
\def\intic@{%
    \mathchoice{\hskip.5em}{\hskip.4em}{\hskip.4em}{\hskip.4em}}%
\def\negintic@{\mathchoice
 {\hskip-.5em}{\hskip-.4em}{\hskip-.4em}{\hskip-.4em}}%
\def\ints@@{\iflimtoken@                                              
 \def\ints@@@{\iflimits@\negintic@
   \mathop{\intic@\multintlimits@}\limits                             
  \else\multint@\nolimits\fi                                          
  \eat@}
 \else                                                                
 \def\ints@@@{\iflimits@\negintic@
  \mathop{\intic@\multintlimits@}\limits\else
  \multint@\nolimits\fi}\fi\ints@@@}%
\def\intkern@{\mathchoice{\!\!\!}{\!\!}{\!\!}{\!\!}}%
\def\plaincdots@{\mathinner{\cdotp\cdotp\cdotp}}%
\def\intdots@{\mathchoice{\plaincdots@}%
 {{\cdotp}\mkern1.5mu{\cdotp}\mkern1.5mu{\cdotp}}%
 {{\cdotp}\mkern1mu{\cdotp}\mkern1mu{\cdotp}}%
 {{\cdotp}\mkern1mu{\cdotp}\mkern1mu{\cdotp}}}%
\def\RIfM@{\relax\protect\ifmmode}
\def\text{\RIfM@\expandafter\text@\else\expandafter\mbox\fi}
\let\nfss@text\text
\def\text@#1{\mathchoice
   {\textdef@\displaystyle\f@size{#1}}%
   {\textdef@\textstyle\tf@size{\firstchoice@false #1}}%
   {\textdef@\textstyle\sf@size{\firstchoice@false #1}}%
   {\textdef@\textstyle \ssf@size{\firstchoice@false #1}}%
   \glb@settings}
\def\textdef@#1#2#3{\hbox{{%
                    \everymath{#1}%
                    \let\f@size#2\selectfont
                    #3}}}
\newif\iffirstchoice@
\def\Let@{\relax\iffalse{\fi\let\\=\cr\iffalse}\fi}%
\def\vspace@{\def\vspace##1{\crcr\noalign{\vskip##1\relax}}}%
\def\multilimits@{\bgroup\vspace@\Let@
 \baselineskip\fontdimen10 \scriptfont\tw@
 \advance\baselineskip\fontdimen12 \scriptfont\tw@
 \lineskip\thr@@\fontdimen8 \scriptfont\thr@@
 \lineskiplimit\lineskip
 \vbox\bgroup\ialign\bgroup\hfil$\m@th\scriptstyle{##}$\hfil\crcr}%
\def\Sb{_\multilimits@}%
\def\endSb{\crcr\egroup\egroup\egroup}%
\def\Sp{^\multilimits@}%
\newdimen\ex@
\def\rightarrowfill@#1{$#1\m@th\mathord-\mkern-6mu\cleaders
 \hbox{$#1\mkern-2mu\mathord-\mkern-2mu$}\hfill
 \mkern-6mu\mathord\rightarrow$}%
\def\leftarrowfill@#1{$#1\m@th\mathord\leftarrow\mkern-6mu\cleaders
 \hbox{$#1\mkern-2mu\mathord-\mkern-2mu$}\hfill\mkern-6mu\mathord-$}%
\def\leftrightarrowfill@#1{$#1\m@th\mathord\leftarrow
\mkern-6mu\cleaders
 \hbox{$#1\mkern-2mu\mathord-\mkern-2mu$}\hfill
 \mkern-6mu\mathord\rightarrow$}%
\def\overrightarrow{\mathpalette\overrightarrow@}%
\def\overrightarrow@#1#2{\vbox{\ialign{##\crcr\rightarrowfill@#1\crcr
 \noalign{\kern-\ex@\nointerlineskip}$\m@th\hfil#1#2\hfil$\crcr}}}%
\def\overleftarrow{\mathpalette\overleftarrow@}%
\def\overleftarrow@#1#2{\vbox{\ialign{##\crcr\leftarrowfill@#1\crcr
 \noalign{\kern-\ex@\nointerlineskip}$\m@th\hfil#1#2\hfil$\crcr}}}%
\def\overleftrightarrow{\mathpalette\overleftrightarrow@}%
\def\overleftrightarrow@#1#2{\vbox{\ialign{##\crcr
   \leftrightarrowfill@#1\crcr
 \noalign{\kern-\ex@\nointerlineskip}$\m@th\hfil#1#2\hfil$\crcr}}}%
\def\underrightarrow{\mathpalette\underrightarrow@}%
\def\underrightarrow@#1#2{\vtop{\ialign{##\crcr$\m@th\hfil#1#2\hfil
  $\crcr\noalign{\nointerlineskip}\rightarrowfill@#1\crcr}}}%
\def\underleftarrow{\mathpalette\underleftarrow@}%
\def\underleftarrow@#1#2{\vtop{\ialign{##\crcr$\m@th\hfil#1#2\hfil
  $\crcr\noalign{\nointerlineskip}\leftarrowfill@#1\crcr}}}%
\def\underleftrightarrow{\mathpalette\underleftrightarrow@}%
\def\underleftrightarrow@#1#2{\vtop{\ialign{##\crcr$\m@th
  \hfil#1#2\hfil$\crcr
 \noalign{\nointerlineskip}\leftrightarrowfill@#1\crcr}}}%
\def\qopnamewl@#1{\mathop{\operator@font#1}\nlimits@}
\let\nlimits@\displaylimits
\def\setboxz@h{\setbox\z@\hbox}
\def\varlim@#1#2{\mathop{\vtop{\ialign{##\crcr
 \hfil$#1\m@th\operator@font lim$\hfil\crcr
 \noalign{\nointerlineskip}#2#1\crcr
 \noalign{\nointerlineskip\kern-\ex@}\crcr}}}}
 \def\rightarrowfill@#1{\m@th\setboxz@h{$#1-$}\ht\z@\z@
  $#1\copy\z@\mkern-6mu\cleaders
  \hbox{$#1\mkern-2mu\box\z@\mkern-2mu$}\hfill
  \mkern-6mu\mathord\rightarrow$}
\def\leftarrowfill@#1{\m@th\setboxz@h{$#1-$}\ht\z@\z@
  $#1\mathord\leftarrow\mkern-6mu\cleaders
  \hbox{$#1\mkern-2mu\copy\z@\mkern-2mu$}\hfill
  \mkern-6mu\box\z@$}
\def\projlim{\qopnamewl@{proj\,lim}}
\def\injlim{\qopnamewl@{inj\,lim}}
\def\varinjlim{\mathpalette\varlim@\rightarrowfill@}
\def\varprojlim{\mathpalette\varlim@\leftarrowfill@}
\def\varliminf{\mathpalette\varliminf@{}}
\def\varliminf@#1{\mathop{\underline{\vrule\@depth.2\ex@\@width\z@
   \hbox{$#1\m@th\operator@font lim$}}}}
\def\varlimsup{\mathpalette\varlimsup@{}}
\def\varlimsup@#1{\mathop{\overline
  {\hbox{$#1\m@th\operator@font lim$}}}}
\def\align{\@verbatim \frenchspacing\@vobeyspaces \@alignverbatim
You are using the "align" environment in a style in which it is not defined.}
\let\csname endalign*\endcsname =\endtrivlist
\def\alignat{\@verbatim \frenchspacing\@vobeyspaces \@alignatverbatim
You are using the "alignat" environment in a style in which it is not defined.}
\let\csname endalignat*\endcsname =\endtrivlist
\def\xalignat{\@verbatim \frenchspacing\@vobeyspaces \@xalignatverbatim
You are using the "xalignat" environment in a style in which it is not defined.}
\let\csname endxalignat*\endcsname =\endtrivlist
\def\gather{\@verbatim \frenchspacing\@vobeyspaces \@gatherverbatim
You are using the "gather" environment in a style in which it is not defined.}
\let\csname endgather*\endcsname =\endtrivlist
\def\multiline{\@verbatim \frenchspacing\@vobeyspaces \@multilineverbatim
You are using the "multiline" environment in a style in which it is not defined.}
\let\csname endmultiline*\endcsname =\endtrivlist
\def\arrax{\@verbatim \frenchspacing\@vobeyspaces \@arraxverbatim
You are using a type of "array" construct that is only allowed in AmS-LaTeX.}
\def\tabulax{\@verbatim \frenchspacing\@vobeyspaces \@tabulaxverbatim
You are using a type of "tabular" construct that is only allowed in AmS-LaTeX.}
\let\csname endarrax*\endcsname =\endtrivlist
\let\csname endtabulax*\endcsname =\endtrivlist
 \def\endequation{%
     \ifmmode\ifinner 
      \iftag@
        \addtocounter{equation}{-1} 
        $\hfil
           \displaywidth\linewidth\@taggnum\egroup \endtrivlist
        \global\tag@false
        \global\@ignoretrue   
      \else
        $\hfil
           \displaywidth\linewidth\@eqnnum\egroup \endtrivlist
        \global\tag@false
        \global\@ignoretrue 
      \fi
     \else   
      \iftag@
        \addtocounter{equation}{-1} 
        \eqno \hbox{\@taggnum}
        \global\tag@false%
        $$\global\@ignoretrue
      \else
        \eqno \hbox{\@eqnnum}
        $$\global\@ignoretrue
      \fi
     \fi\fi
 } 
 \newif\iftag@ \tag@false
 \def\TCItag{\@ifnextchar*{\@TCItagstar}{\@TCItag}}
 \def\@TCItag#1{%
     \global\tag@true
     \global\def\@taggnum{(#1)}%
     \global\def\@currentlabel{#1}}
 \def\@TCItagstar*#1{%
     \global\tag@true
     \global\def\@taggnum{#1}%
     \global\def\@currentlabel{#1}}
     \def\tag{\@ifnextchar*{\@tagstar}{\@tag}}
     \def\@tag#1{%
         \global\tag@true
         \global\def\@taggnum{(#1)}}
     \def\@tagstar*#1{%
         \global\tag@true
         \global\def\@taggnum{#1}}
\begin{document}

\title{On a initial value problem arising in mechanics}
\author{Teodor M. Atanackovic%
\begin{footnote}
{Department of Mechanics, Faculty of Technical Sciences,
University of Novi Sad, Trg D. Obradovica 6, 21000 Novi Sad,
Serbia, atanackovic@uns.ac.rs}
\end{footnote}, Stevan Pilipovic%
\begin{footnote}
{Department of Mathematics, Faculty of Natural Sciences and
Mathematics, University of Novi Sad, Trg D. Obradovica 4, 21000
Novi Sad, Serbia, stevan.pilipovic@dmi.uns.ac.rs}
\end{footnote} and Dusan Zorica%
\begin{footnote}
{Mathematical Institute, Serbian Academy of Arts and Sciences,
Kneza Mihaila 36, 11000 Beograd, Serbia, dusan\textunderscore
zorica@mi.sanu.ac.rs}
\end{footnote}}
\maketitle

\begin{abstract}
We study initial value problem for a system consisting of an integer order
and distributed-order fractional differential equation describing forced
oscillations of a body attached to a free end of a light viscoelastic rod.
Explicit form of a solution for a class of linear viscoelastic solids is
given in terms of a convolution integral. Restrictions on storage and loss
moduli following from the Second Law of Thermodynamics play the crucial role
in establishing the form of the solution. Some previous results are shown to
be special cases of the present analysis.

\bigskip

\noindent \textbf{Keywords:} distributed-order fractional differential
equation, fractional viscoelastic material, forced oscillations of a body
\end{abstract}

\section{Introduction}

We study the initial value problem given by
\begin{gather}
\int_{0}^{1}\phi _{\sigma }\left( \gamma \right) {}_{0}\mathrm{D}%
_{t}^{\gamma }\sigma \left( t\right) \mathrm{d}\gamma =\int_{0}^{1}\phi
_{\varepsilon }\left( \gamma \right) {}_{0}\mathrm{D}_{t}^{\gamma
}\varepsilon \left( t\right) \mathrm{d}\gamma ,\;\;t>0,
\label{sys-lajt-ode-1} \\
-\sigma \left( t\right) +F\left( t\right) =\frac{\mathrm{d}^{2}}{\mathrm{d}%
t^{2}}\varepsilon \left( t\right) ,\;\;t>0,  \label{sys-lajt-ode-2} \\
\sigma \left( 0\right) =0,\;\;\;\varepsilon \left( 0\right) =0,\;\;\;\frac{%
\mathrm{d}}{\mathrm{d}t}\varepsilon \left( 0\right) =0.  \label{IC-lajt-ode}
\end{gather}%
In (\ref{sys-lajt-ode-1}) - (\ref{IC-lajt-ode}) we use $\sigma $ to denote
stress, $\varepsilon $ denotes strain and $F$ denotes the force acting at
the free end of a rod. The left Riemann-Liouville fractional derivative of
order $\gamma \in \left( 0,1\right) $ is defined by
\begin{equation*}
{}_{0}\mathrm{D}_{t}^{\gamma }y\left( t\right) :=\frac{\mathrm{d}}{\mathrm{d}%
t}\left( \frac{t^{-\gamma }}{\Gamma \left( 1-\gamma \right) }\ast y\left(
t\right) \right) ,\;\;t>0,
\end{equation*}%
see \cite{SKM}. The Euler gamma function is denoted by $\Gamma $ and $\ast $
denotes the convolution: $\left( f\ast g\right) \left( t\right)
:=\int_{0}^{t}f\left( \tau \right) g\left( t-\tau \right) \mathrm{d}\tau ,$ $%
t\in
\mathbb{R}
,$ if $f,g\in L_{loc}^{1}\left( \mathbf{%
\mathbb{R}
}\right) $ and $\limfunc{supp}f,g\subset \left[ 0,\infty \right) $. The
displacement of an arbitrary point of a rod that is at the initial moment at
the position $x$ is
\begin{equation*}
u\left( x,t\right) =x\varepsilon \left( t\right) ,\;\;t>0,\;\;x\in \left[ 0,1%
\right] ,
\end{equation*}%
see \cite{a}.

Equation (\ref{sys-lajt-ode-1}) represents a constitutive equation of a
viscoelastic rod, (\ref{sys-lajt-ode-2}) is equation of motion of a body
attached to a free end of a rod and (\ref{IC-lajt-ode}) represent the
initial conditions. For the study of waves in viscoelastic materials of
fractional type see \cite{Mai-10}.

The derivation of the system (\ref{sys-lajt-ode-1}) - (\ref{IC-lajt-ode}) is
given in \cite{a}, where the special case of (\ref{sys-lajt-ode-1}) is
considered when the constitutive functions $\phi _{\sigma }$ and $\phi
_{\varepsilon }$ have the form
\begin{equation*}
\phi _{\sigma }\left( \gamma \right) =a^{\gamma }\;\;\text{and}\;\;\phi
_{\varepsilon }\left( \gamma \right) =c\,\delta \left( \gamma \right)
+b^{\gamma },\;\;0<a\leq b,\;c>0,
\end{equation*}
where $\delta $ is the Dirac delta distribution. In the present
work we allow constitutive functions (or distributions) $\phi
_{\sigma }$ and $\phi _{\varepsilon }$ to be arbitrary satisfying
Condition \ref{cond-1} and Assumption \ref{cond}. To be physically
admissible (\ref{sys-lajt-ode-1}) must satisfy two conditions: one
being mathematical, the other being physical. The first condition
that is essential on the level of generality treated in the work
requires that to real forcing ($\sigma $ is a
real-valued function of real variable) there corresponds a real response $%
\varepsilon $ ($\varepsilon $ is a real-valued function of real variable).
This mathematical condition, that we believe is new, is precisely formulated
as $\left( i\right) $ of Condition \ref{cond-1}. The second condition is
physical and requires that in any closed deformation cycle there is a
dissipation of energy. This is the formulation of the Second Law of
Thermodynamics for isothermal processes. This condition is stated in its
equivalent form as $\left( ii\right) $ of Condition \ref{cond-1}.

\section{Analysis of the problem}

In the following we use the Laplace transform method. The Laplace transform
is defined by
\begin{equation*}
\tilde{f}\left( s\right) =\mathcal{L}\left[ f\left( t\right) \right] \left(
s\right) :=\int_{0}^{\infty }f\left( t\right) \mathrm{e}^{-st}\mathrm{d}%
t,\;\;\func{Re}s>k,
\end{equation*}%
where $f\in L_{loc}^{1}\left( \mathbf{%
\mathbb{R}
}\right) ,$ $f\equiv 0$ in $\left( -\infty ,0\right] $ and $\left\vert
f\left( t\right) \right\vert \leq c\mathrm{e}^{kt},$ $t>0,$ for some $k>0.\ $%
We denote by $\mathcal{S}^{\prime }\left(
\mathbb{R}
\right) $ the space of tempered distributions on $%
\mathbb{R}
,$ while $\mathcal{S}_{+}^{\prime }$ is its subspace consisting of tempered
distributions with support $\left[ 0,\infty \right) .$ We refer to \cite%
{vlad} for the properties of this space as well as for the Laplace transform
within it. We also use $C\left( \left[ 0,1\right] ,\mathcal{S}_{+}^{\prime
}\right) $ to denote the space of continuous functions on $\left[ 0,1\right]
$ with the values in $\mathcal{S}_{+}^{\prime }.$

Applying formally the Laplace transform to (\ref{sys-lajt-ode-1}) - (\ref%
{IC-lajt-ode}) we get%
\begin{gather}
\tilde{\sigma}\left( s\right) \int_{0}^{1}\phi _{\sigma }\left( \gamma
\right) s^{\gamma }\mathrm{d}\gamma =\tilde{\varepsilon}\left( s\right)
\int_{0}^{1}\phi _{\varepsilon }\left( \gamma \right) s^{\gamma }\mathrm{d}%
\gamma ,\;\;s\in D,  \label{S-LT-2} \\
\tilde{\sigma}\left( s\right) +s^{2}\tilde{\varepsilon}\left( s\right) =%
\tilde{F}\left( s\right) ,\;\;s\in D.  \label{S-LT-3}
\end{gather}%
By (\ref{S-LT-2}) we have
\begin{equation}
\tilde{\sigma}\left( x,s\right) =\frac{1}{M^{2}\left( s\right) }\tilde{%
\varepsilon}\left( x,s\right) ,\;\;s\in D,  \label{sigma-tilda}
\end{equation}%
where
\begin{equation}
M\left( s\right) :=\sqrt{\frac{\int_{0}^{1}\phi _{\sigma }\left( \gamma
\right) s^{\gamma }\mathrm{d}\gamma }{\int_{0}^{1}\phi _{\varepsilon }\left(
\gamma \right) s^{\gamma }\mathrm{d}\gamma }},\;\;s\in D\subset
\mathbb{C}
.  \label{M}
\end{equation}%
Note that for $s=\mathrm{i}\omega $ we obtain the complex modulus (see \cite%
{b-t})
\begin{equation}
E\left( \omega \right) =E^{\prime }\left( \omega \right) +\mathrm{i}%
E^{\prime \prime }\left( \omega \right) :=\frac{1}{M^{2}\left( \mathrm{i}%
\omega \right) }=\frac{\int_{0}^{1}\phi _{\varepsilon }\left( \gamma \right)
\left( \mathrm{i}\omega \right) ^{\gamma }\mathrm{d}\gamma }{%
\int_{0}^{1}\phi _{\sigma }\left( \gamma \right) \left( \mathrm{i}\omega
\right) ^{\gamma }\mathrm{d}\gamma },\;\;\omega \in \left( 0,\infty \right) .
\label{ep-es}
\end{equation}%
Functions $E^{\prime }$ and $E^{\prime \prime }$ are real-valued and
represent the storage and loss modulus, respectively.

By (\ref{S-LT-3}) and (\ref{sigma-tilda}) we have%
\begin{equation}
\tilde{\varepsilon}\left( s\right) =\tilde{F}\left( s\right) \tilde{P}\left(
s\right) ,\;\;\;\;\tilde{\sigma}\left( s\right) =\tilde{F}\left( s\right)
\tilde{Q}\left( s\right) ,\;\;s\in D,  \label{eps-sig}
\end{equation}%
where%
\begin{equation}
\tilde{P}\left( s\right) :=\frac{M^{2}\left( s\right) }{1+\left( sM\left(
s\right) \right) ^{2}},\;\;\;\;\tilde{Q}\left( s\right) :=\frac{1}{1+\left(
sM\left( s\right) \right) ^{2}},\;\;s\in D,  \label{P-l}
\end{equation}%
and $M$ is given by (\ref{M}).

Formally, by inverting the Laplace transform in (\ref{eps-sig}), we obtain%
\begin{equation}
\varepsilon \left( t\right) =F\left( t\right) \ast P\left( t\right)
,\;\;\;\;\sigma \left( t\right) =F\left( t\right) \ast Q\left( t\right)
,\;\;t>0.  \label{e-s}
\end{equation}

We discuss the restrictions that $\phi _{\sigma }$ and $\phi _{\varepsilon }$
must satisfy. The first restriction follows from the fact that $P$ and $Q$
must be real-valued functions, so that strain $\varepsilon $ and stress $%
\sigma ,$ given by (\ref{e-s}), are real. The second condition imposes the
Second Law of Thermodynamics, which requires that (in isothermal case) the
dissipation work must be positive. Mathematically, these conditions read as
follows.

\begin{condition}
\label{cond-1}\qquad

\begin{enumerate}
\item[$\left( i\right) $] There exists $x_{0}\in \mathbb{R}$ such that
\begin{equation*}
M\left( x\right) =\sqrt{\frac{\int_{0}^{1}\phi _{\sigma }\left( \gamma
\right) x^{\gamma }\mathrm{d}\gamma }{\int_{0}^{1}\phi _{\varepsilon }\left(
\gamma \right) x^{\gamma }\mathrm{d}\gamma }}\in \mathbb{R},\;\;\text{for
all }x>x_{0}.
\end{equation*}

\item[$\left( ii\right) $] For all $\omega \in \left( 0,\infty \right) $ we
have
\begin{equation*}
E^{\prime }\left( \omega \right) \geq 0,\;\;\;\;E^{\prime \prime }\left(
\omega \right) \geq 0,
\end{equation*}%
where $E^{\prime }$ and $E^{\prime \prime }$ are storage and loss moduli,
respectively, given by (\ref{ep-es}), see \cite{b-t}.
\end{enumerate}
\end{condition}

The motivation for $\left( i\right) $ of Condition \ref{cond-1} follows from
the following theorem of Doetsch.

\begin{theorem}[{\protect\cite[p. 293, Satz 2]{dec}}]
\label{dec}Let $f(s)=\mathcal{L}[F](s),$ $\func{Re}s>x_{0}\in
\mathbb{R}
,$ be real-valued on the real half-line $s\in \left( x_{0},\infty \right) .$
Then function $F$ is real-valued almost everywhere.

Alternatively, if $f$ is real-valued at a sequence of equidistant points on
the real axis, then function $F$ is real-valued almost everywhere.
\end{theorem}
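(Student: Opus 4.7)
The governing idea is to reduce the real-valuedness of $F$ to a uniqueness statement for the Laplace transform applied to an auxiliary function. Write $F=F_{1}+\mathrm{i}F_{2}$, where $F_{1},F_{2}\in L^{1}_{loc}(\mathbb{R})$ are real-valued, supported in $[0,\infty)$, and satisfy the same exponential estimate as $F$. By $\mathbb{C}$-linearity of the Laplace transform,
\begin{equation*}
f(s)=\widetilde{F_{1}}(s)+\mathrm{i}\,\widetilde{F_{2}}(s),\;\;\func{Re}s>x_{0},
\end{equation*}
and for real $s>x_{0}$ both $\widetilde{F_{1}}(s)$ and $\widetilde{F_{2}}(s)$ are themselves real numbers, since each is the integral of a real-valued function against a real kernel. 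Thus the hypothesis that $f$ is real on the prescribed set is equivalent to $\widetilde{F_{2}}$ vanishing there, and the claim $F=F_{1}$ a.e.\ will follow from an appropriate injectivity theorem for the Laplace transform applied to $F_{2}$.

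For the first assertion, the hypothesis yields $\widetilde{F_{2}}(s)=0$ for all $s\in(x_{0},\infty)$. The function $\widetilde{F_{2}}$ is holomorphic on the half-plane $\func{Re}s>x_{0}$, so the identity principle forces $\widetilde{F_{2}}\equiv 0$ there. The standard uniqueness (injectivity) of the Laplace transform on functions in $L^{1}_{loc}(\mathbb{R})$ supported in $[0,\infty)$ and of exponential type then gives $F_{2}=0$ almost everywhere, whence $F=F_{1}$ is real a.e.

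The second assertion is the genuinely delicate part. Here one only knows $\widetilde{F_{2}}(s_{0}+nh)=0$ for $n=0,1,2,\ldots$ and some $s_{0}>x_{0}$, $h>0$, and holomorphy of $\widetilde{F_{2}}$ alone does not force it to vanish on such a discrete set. The plan is to invoke Lerch's uniqueness theorem: substitute $u=\mathrm{e}^{-ht}$, so that
\begin{equation*}
\widetilde{F_{2}}(s_{0}+nh)=\int_{0}^{\infty}\mathrm{e}^{-(s_{0}+nh)t}F_{2}(t)\,\mathrm{d}t=\frac{1}{h}\int_{0}^{1}u^{n}\,u^{s_{0}/h-1}F_{2}\!\left(-\tfrac{1}{h}\ln u\right)\mathrm{d}u.
\end{equation*}
The exponential bound $|F_{2}(t)|\le c\mathrm{e}^{kt}$ with $s_{0}>k$ ensures that $G(u):=u^{s_{0}/h-1}F_{2}(-\tfrac{1}{h}\ln u)$ lies in $L^{1}(0,1)$. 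All moments of $G$ vanish, and by the Weierstrass (Müntz--Szász) approximation theorem the monomials are total in $C([0,1])$, so $G=0$ a.e.\ on $(0,1)$; unwinding the substitution gives $F_{2}=0$ a.e.\ on $(0,\infty)$ and hence $F=F_{1}$ is real a.e.

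The main obstacle is precisely the equidistant-sequence case: proving that vanishing on such a sparse set already implies vanishing almost everywhere. This is where the moment-problem reformulation and the Weierstrass/Müntz approximation step do the heavy lifting; checking the integrability of $G$ under the standing exponential-growth hypothesis is the one technical point that must be verified carefully before invoking the completeness of monomials.
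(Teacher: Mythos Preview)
The paper does not supply its own proof of this statement; it is quoted as a black box from Doetsch's \emph{Handbuch der Laplace-Transformationen I} (the citation \cite[p.~293, Satz~2]{dec}) and used only to justify Condition~\ref{cond-1}$\,(i)$. Your argument is correct and is in fact the classical proof one finds behind that citation: decompose $F=F_{1}+\mathrm{i}F_{2}$ into real and imaginary parts, note that for real $s$ both $\widetilde{F_{1}}(s)$ and $\widetilde{F_{2}}(s)$ are real so the hypothesis forces $\widetilde{F_{2}}$ to vanish on the given real set, and then conclude $F_{2}=0$ a.e.\ by uniqueness---via the identity theorem for holomorphic functions in the half-line case, and via Lerch's theorem (the substitution $u=\mathrm{e}^{-ht}$ reducing to a Hausdorff moment problem, settled by Weierstrass approximation) in the equidistant-sequence case. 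There is nothing further to compare.
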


If $\phi _{\sigma }$ and $\phi _{\varepsilon }$ are such that $\left(
i\right) $ of Condition \ref{cond-1} is satisfied, Theorem \ref{dec} ensures
that inversions of (\ref{eps-sig}) with (\ref{P-l}), given by (\ref{e-s}),
are real. As it is well-known, \cite{b-t}, $\left( ii\right) $ of Condition %
\ref{cond-1} guarantees that the Second Law of Thermodynamics for the
isothermal process is satisfied. We shall see from Proposition \ref{pr-l}
below that Condition \ref{cond-1} along with an additional assumption on the
asymptotics of $M$ (Assumption \ref{cond} below) guarantees that the poles
of the solution kernel in the Laplace domain (\ref{P-l}) belong to the left
complex half-plane. In this case the amplitude of the solution decreases
with the time; this is a characteristic behavior for a dissipative process.

\begin{remark}
\qquad

\begin{enumerate}
\item[$\left( i\right) $] Condition \ref{cond-1} is satisfied for the
fractional Zener model%
\begin{equation}
\left( 1+a\,{}_{0}\mathrm{D}_{t}^{\alpha }\right) \sigma \left( t\right)
=\left( 1+b\,{}_{0}\mathrm{D}_{t}^{\alpha }\right) \varepsilon \left(
t\right)   \label{Zener}
\end{equation}%
and for the distributed-order model%
\begin{equation}
\int_{0}^{1}a^{\gamma }\,{}_{0}\mathrm{D}_{t}^{\gamma }\sigma \left(
t\right) \mathrm{d}\gamma =\int_{0}^{1}b^{\gamma }\,{}_{0}\mathrm{D}%
_{t}^{\gamma }\varepsilon \left( t\right) \mathrm{d}\gamma ,  \label{A-B}
\end{equation}%
In those case the function $M$ takes the following forms%
\begin{eqnarray}
M\left( s\right)  &=&\sqrt{\frac{1+as^{\alpha }}{1+bs^{\alpha }}},\;\;s\in
\mathbb{C}
\backslash \left( -\infty ,0\right] ,\;0<a\leq b,\;\alpha \in \left(
0,1\right) ,  \label{M-Zener} \\
M\left( s\right)  &=&\sqrt{\frac{\ln \left( bs\right) }{\ln \left( as\right)
}\frac{as-1}{bs-1}},\;\;s\in
\mathbb{C}
\backslash \left( -\infty ,0\right] ,\;0<a\leq b,  \label{M-a-b}
\end{eqnarray}%
respectively. Putting $s=x\in
\mathbb{R}
$ in the previous expressions we see that $\left( i\right) $ of Condition %
\ref{cond-1} is satisfied with $x_{0}=0$, while $0<a\leq b$ ensures that $%
\left( ii\right) $ of Condition \ref{cond-1} is satisfied, see \cite%
{AKOZ,b-t}.

\item[$\left( ii\right) $] In general, stress $\sigma $ as a function of a
real-valued strain $\varepsilon $ may not be real-valued. In \cite{AKP} we
have such a situation, because $\left( i\right) $ of Condition \ref{cond-1}
is not satisfied. This shows the importance of $\left( i\right) $ of
Condition \ref{cond-1}.
\end{enumerate}
\end{remark}

\section{Inversion of the Laplace transforms}

In order to obtain $\varepsilon $ and $\sigma ,$ by (\ref{eps-sig}) we have
to determine functions $P$ and $Q,$ i.e., to invert the Laplace transform in
(\ref{P-l}). We need an additional assumption on the function $M,$ given by (%
\ref{M}).

\begin{assumption}
\label{cond}Let $M$ be of the form%
\begin{equation*}
M\left( s\right) =r\left( s\right) +\mathrm{i}h\left( s\right) ,\;\;\text{as}%
\;\;\left\vert s\right\vert \rightarrow \infty ,
\end{equation*}%
and suppose that%
\begin{equation*}
\lim_{\left\vert s\right\vert \rightarrow \infty }r\left( s\right)
=c_{\infty }>0,\;\;\lim_{\left\vert s\right\vert \rightarrow \infty }h\left(
s\right) =0,\;\;\lim_{\left\vert s\right\vert \rightarrow 0}M\left( s\right)
=c_{0},
\end{equation*}%
for some constants$\;c_{\infty },c_{0}>0.$
\end{assumption}

Assumption \ref{cond} is motivated by the fractional Zener (\ref{Zener}) and
distributed-order model (\ref{A-B}). We note that both of these models
describe the viscoelastic solid-like body. For the both models mentioned
above we have $c_{\infty }=\sqrt{\frac{a}{b}},$ and $c_{0}=1.$

\begin{proposition}
\label{pr-l}Let $M$ satisfy Condition \ref{cond-1} with $x_{0}=0$ and
Assumption \ref{cond}. Let
\begin{equation}
f\left( s\right) :=1+\left( sM\left( s\right) \right) ^{2},\;\;s\in
\mathbb{C}
.  \label{polovi-l}
\end{equation}%
Then $f$ has two different zeros: $s_{0}$ and its complex conjugate $\bar{s}%
_{0},$ located in the left complex half-plane ($\func{Re}s<0$). The
multiplicity of each zero is one.
\end{proposition}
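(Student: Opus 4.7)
The plan is to combine the Schwarz reflection principle with the argument principle. Since each $s^{\gamma}$ in (\ref{M}) takes its principal branch, $M$ is analytic on the slit plane $\mathbb{C}\setminus(-\infty,0]$, and by Condition \ref{cond-1}$(i)$ it is real on $(0,\infty)$. Schwarz reflection then yields $M(\bar{s})=\overline{M(s)}$, hence $f(\bar{s})=\overline{f(s)}$, so the zeros of $f$ occur in complex conjugate pairs. It therefore suffices to count zeros in the closed upper half of the slit plane, verify they satisfy $\func{Re}\,s<0$, and check simplicity.

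Next I would exclude zeros on the boundary of the right half-plane. For $s=x>0$, $M(x)\in\mathbb{R}$ gives $f(x)=1+x^{2}M^{2}(x)\ge 1$. For $s=\mathrm{i}\omega$ with $\omega>0$, formula (\ref{ep-es}) yields
\[
f(\mathrm{i}\omega)=1-\frac{\omega^{2}}{E(\omega)}=\frac{E(\omega)-\omega^{2}}{E(\omega)},
\]
and Condition \ref{cond-1}$(ii)$ places $E(\omega)=E^{\prime}(\omega)+\mathrm{i}E^{\prime\prime}(\omega)$ in the closed first quadrant, so $\func{Im}\,f(\mathrm{i}\omega)=\omega^{2}E^{\prime\prime}(\omega)/|E(\omega)|^{2}\ge 0$. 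A zero on the positive imaginary axis would force the degenerate coincidence $E^{\prime\prime}(\omega)=0$ and $E^{\prime}(\omega)=\omega^{2}$, which the winding count below will rule out.

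Finally, I would apply the argument principle, using Assumption \ref{cond} in the form $f(s)\sim c_{\infty}^{2}s^{2}$ as $|s|\to\infty$. On a contour enclosing the open right half-plane (imaginary axis from $\mathrm{i}R$ down to $-\mathrm{i}R$, closed by a large semicircle in $\{\func{Re}\,s>0\}$ traversed counterclockwise), the arc contributes $+2\pi$ to $\arg f$, while the imaginary axis contributes $-2\pi$ because $f(\mathrm{i}\omega)$ stays in the closed upper half-plane for $\omega>0$ and in its conjugate for $\omega<0$, with $f(0)=1$ and $f(\pm\mathrm{i}R)\sim -c_{\infty}^{2}R^{2}$. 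The net winding is $0$, so no zeros lie in the open right half-plane. A keyhole contour for the entire slit plane contributes $+4\pi$ on the outer circle; by Schwarz reflection the two sides of the cut $(-\infty,0]$ contribute equal changes of $\arg f$ whose common value vanishes because $\arg f(-x+\mathrm{i}0)\to 0$ both as $x\to 0^{+}$ and as $x\to\infty$ (using $M\to c_{0}$ and $M\to c_{\infty}$ respectively). Hence $f$ has exactly two zeros in the slit plane, both lying in the left half-plane. Conjugate symmetry together with the exclusion of the negative real axis by the branch cut rules out a double real zero, and a higher-order complex zero would double-count by conjugation; so the two zeros form a distinct simple pair $s_{0},\bar{s}_{0}$. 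The main obstacle will be the rigorous control of $\arg f$ along the two sides of the branch cut (preventing any spurious winding around $0$) and the careful dismissal of the degenerate case $E^{\prime\prime}(\omega)=0$ on the imaginary axis.
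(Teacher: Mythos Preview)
Your approach is essentially the paper's: both argue via Schwarz reflection (so that zeros come in conjugate pairs) and the argument principle, using Assumption \ref{cond} to control $f$ on large and small arcs and Condition \ref{cond-1}$(ii)$ to pin $\func{Im}\,f(\mathrm{i}\omega)\ge 0$ along the imaginary axis. The only difference is contour bookkeeping: the paper takes a quarter-disc in the first quadrant (bounded by the positive real and positive imaginary half-axes and arcs of radii $r,R$) to obtain $\Delta\arg f=0$ and hence no zeros in $\{\func{Re}\,s>0\}$, and then a half-disc $\gamma_L$ in the left half-plane bounded by the full imaginary axis and the semicircles $\varphi\in[\tfrac{\pi}{2},\tfrac{3\pi}{2}]$ to read off $\Delta\arg f=4\pi$ directly, rather than your right half-disc plus keyhole. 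Because the paper's $\gamma_L$ never runs \emph{along} the two sides of the cut, the obstacle you flag does not surface explicitly in its computation; but since $\gamma_L$ \emph{encloses} the cut, the same issue is implicitly present and is simply left unremarked there, so your caution is well placed. Your argument for simplicity and distinctness (conjugate pairing plus exclusion of the negative real axis as the branch cut) is in fact more explicit than what the paper provides.
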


\begin{proof}
Since $\bar{s}M\left( \bar{s}\right) =\overline{sM\left( s\right) }$ (bar
denotes the complex conjugation), it is clear from (\ref{polovi-l}) that if $%
s_{0}$ is zero of (\ref{polovi-l}), then its complex conjugate $\bar{s}_{0}$
is also.

We use the argument principle in order to show that there are no zeros of (%
\ref{polovi-l}) in the upper right complex half-plane. Let us consider
contour $\gamma _{R}=\gamma _{R1}\cup \gamma _{R2}\cup \gamma _{R3}\cup
\gamma _{R4},$ presented in Figure \ref{fig-lr}.
\begin{figure}[h]
\centering
\includegraphics[scale=0.65]{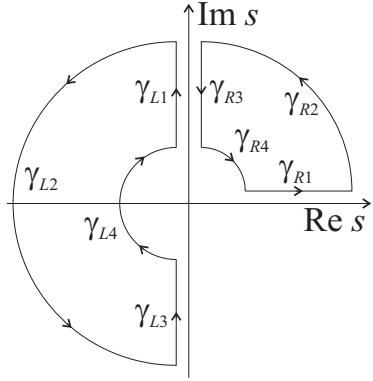}
\caption{Integration contours $\protect\gamma _{L}$ and $\protect\gamma _{R}$%
.}
\label{fig-lr}
\end{figure}
Let $\gamma _{R1}:s=x,$ $x\in \left[ r,R\right] ,$ where $r$ is the radius
of the inner quarter of circle and $R$ is the radius of the outer quarter of
circle. Then, by $\left( i\right) $ of Condition \ref{cond-1} (with $x_{0}=0$%
) we have $\func{Re}f\left( x\right) >0$ and $\func{Im}f\left( x\right)
\equiv 0.$ Assumption \ref{cond} implies $\lim_{r\rightarrow 0}f\left(
x\right) =1$ and $\lim_{R\rightarrow \infty }f\left( x\right) =\infty .$
Hence, $\Delta \arg f\left( s\right) =0$ for $s\in \gamma _{R1},$ as $%
r\rightarrow 0,$ $R\rightarrow \infty .$ Let $\gamma _{R2}:s=R\mathrm{e}^{%
\mathrm{i}\varphi },$ $\varphi \in \left[ 0,\frac{\pi }{2}\right] .$ Then,
by Assumption \ref{cond}, for $\varphi =0$ we have
\begin{equation*}
\func{Re}f\left( R\right) \approx c_{\infty }^{2}R^{2}\rightarrow \infty \;\;%
\text{and}\;\;\func{Im}f\left( R\right) \rightarrow 0,\;\;\text{as}%
\;\;R\rightarrow \infty .
\end{equation*}%
For $\varphi =\frac{\pi }{2}$ Assumption \ref{cond} implies%
\begin{equation}
\func{Re}f\left( R\mathrm{e}^{\mathrm{i}\frac{\pi }{2}}\right) \approx
-c_{\infty }^{2}R^{2}\rightarrow -\infty \;\;\text{and}\;\;\func{Im}f\left( R%
\mathrm{e}^{\mathrm{i}\frac{\pi }{2}}\right) \rightarrow 0,\;\;\text{as}%
\;\;R\rightarrow \infty .  \label{i-pi-pola}
\end{equation}%
By using Assumption \ref{cond} in (\ref{polovi-l}) we obtain%
\begin{equation}
\func{Im}f\left( R\mathrm{e}^{\mathrm{i}\varphi }\right) \approx c_{\infty
}^{2}R^{2}\sin \left( 2\varphi \right) \geq 0,\;\;\varphi \in \left[ 0,\frac{%
\pi }{2}\right] ,\;\;\text{as}\;\;R\rightarrow \infty .  \label{imre}
\end{equation}%
Therefore, $\Delta \arg f\left( s\right) =\pi $ for $s\in \gamma _{R2},$ as $%
R\rightarrow \infty .$ Let $\gamma _{R3}:s=\omega \mathrm{e}^{\mathrm{i}%
\frac{\pi }{2}}=\mathrm{i}\omega ,$ $\omega \in \left[ r,R\right] .$ Then we
have%
\begin{equation}
f\left( \mathrm{i}\omega \right) =1-\omega ^{2}M^{2}\left( \mathrm{i}\omega
\right) ,\;\;\omega \in \left[ r,R\right] .  \label{f-od-ix}
\end{equation}%
Using (\ref{ep-es}) in (\ref{f-od-ix}) we obtain%
\begin{equation*}
f\left( \mathrm{i}\omega \right) =1-\omega ^{2}\frac{E^{\prime }\left(
\omega \right) -\mathrm{i}E^{\prime \prime }\left( \omega \right) }{\left(
E^{\prime }\left( \omega \right) \right) ^{2}+\left( E^{\prime \prime
}\left( \omega \right) \right) ^{2}},\;\;\omega \in \left[ r,R\right] .
\end{equation*}%
Thus,%
\begin{equation}
\func{Im}f\left( \mathrm{i}\omega \right) =\omega ^{2}\frac{E^{\prime \prime
}\left( \omega \right) }{\left( E^{\prime }\left( \omega \right) \right)
^{2}+\left( E^{\prime \prime }\left( \omega \right) \right) ^{2}}%
>0,\;\;\omega \in \left[ r,R\right] ,  \label{im-po-im}
\end{equation}%
due to $\left( ii\right) $ of Condition \ref{cond-1}. Moreover, Assumption %
\ref{cond} applied to (\ref{f-od-ix}) implies%
\begin{eqnarray}
\func{Re}f\left( \mathrm{i}\omega \right)  &\approx &-c_{\infty }^{2}\omega
^{2}\rightarrow -\infty \;\;\text{and}\;\;\func{Im}f\left( \mathrm{i}\omega
\right) \rightarrow 0,\;\;\text{as}\;\;R\rightarrow \infty ,
\label{re-po-im-u-besk} \\
\func{Re}f\left( \mathrm{i}\omega \right)  &\approx &1-c_{0}^{2}\omega
^{2}\rightarrow 1\;\;\text{and}\;\;\func{Im}f\left( \mathrm{i}\omega \right)
\rightarrow 0,\;\;\text{as}\;\;r\rightarrow 0.  \label{re-po-im-u-nuli}
\end{eqnarray}%
We conclude that $\Delta \arg f\left( s\right) =-\pi $ for $s\in \gamma
_{R3},$ as $r\rightarrow 0,$ $R\rightarrow \infty .$ Let $\gamma _{R4}:s=r%
\mathrm{e}^{\mathrm{i}\varphi },$ $\varphi \in \left[ 0,\frac{\pi }{2}\right]
.$ Assumption \ref{cond}, for $\varphi =0$ and $\varphi =\frac{\pi }{2}$
implies
\begin{eqnarray}
&&\func{Re}f\left( r\right) \approx 1+c_{0}^{2}r^{2}\rightarrow 1\;\;\text{%
and}\;\;\func{Im}f\left( r\right) \rightarrow 0,\;\;\text{as}%
\;\;r\rightarrow 0,  \notag \\
&&\func{Re}f\left( r\mathrm{e}^{\mathrm{i}\frac{\pi }{2}}\right) \approx
1-c_{0}^{2}r^{2}\rightarrow 1\;\;\text{and}\;\;\func{Im}f\left( r\mathrm{e}^{%
\mathrm{i}\frac{\pi }{2}}\right) \rightarrow 0,\;\;\text{as}\;\;r\rightarrow
0,  \label{i-pi-pola-1}
\end{eqnarray}%
as well as%
\begin{equation}
\func{Re}f\left( r\mathrm{e}^{\mathrm{i}\varphi }\right) \approx
1+c_{0}^{2}r^{2}\cos \left( 2\varphi \right) \rightarrow 1\;\;\text{and}\;\;%
\func{Im}f\left( r\mathrm{e}^{\mathrm{i}\varphi }\right) \approx
c_{0}^{2}r^{2}\sin \left( 2\varphi \right) \rightarrow 0,  \label{re-im}
\end{equation}%
for $\varphi \in \left[ 0,\frac{\pi }{2}\right] ,$ as $r\rightarrow 0.$ We
see that $\Delta \arg f\left( s\right) =0$ for $s\in \gamma _{R4},$ as $%
r\rightarrow 0.$ Thus, we conclude that
\begin{equation*}
\Delta \arg f\left( s\right) =0\;\;\text{for}\;\;s\in \gamma _{R},\;\;\text{%
as}\;\;r\rightarrow 0,\;\;R\rightarrow \infty .
\end{equation*}%
By the argument principle and that fact that the zeros are complex
conjugated we conclude that $f$ has no zeros in the right complex half-plane.

We shall again use the argument principle and show that there are two zeros
of (\ref{polovi-l}) in the left complex half-plane. Let us consider contour $%
\gamma _{L}=\gamma _{L1}\cup \gamma _{L2}\cup \gamma _{L3}\cup \gamma _{L4},$
presented in Figure \ref{fig-lr}. Let $\gamma _{L1}:s=\omega \mathrm{e}^{%
\mathrm{i}\frac{\pi }{2}}=\mathrm{i}\omega ,$ $\omega \in \left[ r,R\right] .
$ Then (\ref{im-po-im}), (\ref{re-po-im-u-besk}) and (\ref{re-po-im-u-nuli})
hold. Thus, $\Delta \arg f\left( s\right) =\pi $ for $s\in \gamma _{L1},$ as
$r\rightarrow 0,$ $R\rightarrow \infty .$ Let $\gamma _{L2}:s=R\mathrm{e}^{%
\mathrm{i}\varphi },$ $\varphi \in \left[ \frac{\pi }{2},\frac{3\pi }{2}%
\right] .$ Then, for $\varphi =\frac{\pi }{2}$ we have that (\ref{i-pi-pola}%
) holds. For $\varphi =\pi $ Assumption \ref{cond} implies
\begin{equation*}
\func{Re}f\left( R\mathrm{e}^{\mathrm{i}\pi }\right) \approx c_{\infty
}^{2}R^{2}\rightarrow \infty \;\;\text{and}\;\;\func{Im}f\left( R\mathrm{e}^{%
\mathrm{i}\pi }\right) \rightarrow 0,\;\;\text{as}\;\;R\rightarrow \infty ,
\end{equation*}%
while for $\varphi =\frac{3\pi }{2}$ it implies
\begin{equation*}
\func{Re}f\left( R\mathrm{e}^{\mathrm{i}\frac{3\pi }{2}}\right) \approx
-c_{\infty }^{2}R^{2}\rightarrow -\infty \;\;\text{and}\;\;\func{Im}f\left( R%
\mathrm{e}^{\mathrm{i}\frac{3\pi }{2}}\right) \rightarrow 0,\;\;\text{as}%
\;\;R\rightarrow \infty .
\end{equation*}%
By (\ref{imre}), we have
\begin{eqnarray*}
\func{Im}f\left( R\mathrm{e}^{\mathrm{i}\varphi }\right)  &\approx
&c_{\infty }^{2}R^{2}\sin \left( 2\varphi \right) \leq 0,\;\;\varphi \in %
\left[ \frac{\pi }{2},\pi \right) ,\;\;\text{as}\;\;R\rightarrow \infty , \\
\func{Im}f\left( R\mathrm{e}^{\mathrm{i}\varphi }\right)  &\approx
&c_{\infty }^{2}R^{2}\sin \left( 2\varphi \right) \geq 0,\;\;\varphi \in
\left( \pi ,\frac{3\pi }{2}\right] ,\;\;\text{as}\;\;R\rightarrow \infty .
\end{eqnarray*}%
We conclude that $\Delta \arg f\left( s\right) =2\pi $ for $s\in \gamma
_{L2},$ as $R\rightarrow \infty .$ Let $\gamma _{L3}:s=\omega \mathrm{e}^{%
\mathrm{i}\frac{3\pi }{2}}=-\mathrm{i}\omega ,$ $\omega \in \left[ r,R\right]
.$ By (\ref{polovi-l}) we have%
\begin{equation}
f\left( -\mathrm{i}\omega \right) =1-\omega ^{2}\overline{M^{2}\left(
\mathrm{i}\omega \right) },\;\;\omega \in \left[ r,R\right] ,
\label{f-od-minus-ix}
\end{equation}%
since $M\left( \bar{s}\right) =\overline{M\left( s\right) }.$ Using (\ref%
{ep-es}) in (\ref{f-od-minus-ix}) we obtain%
\begin{equation*}
f\left( -\mathrm{i}\omega \right) =1-\omega ^{2}\frac{E^{\prime }\left(
\omega \right) +\mathrm{i}E^{\prime \prime }\left( \omega \right) }{\left(
E^{\prime }\left( \omega \right) \right) ^{2}+\left( E^{\prime \prime
}\left( \omega \right) \right) ^{2}},\;\;\omega \in \left[ r,R\right] .
\end{equation*}%
Thus,%
\begin{equation*}
\func{Im}f\left( -\mathrm{i}\omega \right) =-\omega ^{2}\frac{E^{\prime
\prime }\left( \omega \right) }{\left( E^{\prime }\left( \omega \right)
\right) ^{2}+\left( E^{\prime \prime }\left( \omega \right) \right) ^{2}}%
<0,\;\;\omega \in \left[ r,R\right] ,
\end{equation*}%
due to $\left( ii\right) $ of Condition \ref{cond-1}. Note that Assumption %
\ref{cond} applied to (\ref{f-od-minus-ix}) implies%
\begin{eqnarray*}
\func{Re}f\left( -\mathrm{i}\omega \right)  &\approx &-c_{\infty }^{2}\omega
^{2}\rightarrow -\infty \;\;\text{and}\;\;\func{Im}f\left( \mathrm{i}\omega
\right) \rightarrow 0,\;\;\text{as}\;\;R\rightarrow \infty , \\
\func{Re}f\left( -\mathrm{i}\omega \right)  &\approx &1-c_{0}^{2}\omega
^{2}\rightarrow 1\;\;\text{and}\;\;\func{Im}f\left( \mathrm{i}\omega \right)
\rightarrow 0,\;\;\text{as}\;\;r\rightarrow 0.
\end{eqnarray*}%
We conclude that $\Delta \arg f\left( s\right) =\pi $ for $s\in \gamma _{L3},
$ as $r\rightarrow 0,$ $R\rightarrow \infty .$ Let $\gamma _{L4}:s=r\mathrm{e%
}^{\mathrm{i}\varphi },$ $\varphi \in \left[ \frac{\pi }{2},\frac{3\pi }{2}%
\right] .$ Then, for $\varphi =\frac{\pi }{2}$ we have that (\ref%
{i-pi-pola-1}) holds. For $\varphi =\frac{3\pi }{2}$ Assumption \ref{cond}
implies%
\begin{equation*}
\func{Re}f\left( r\mathrm{e}^{\mathrm{i}\frac{3\pi }{2}}\right) \approx
1-c_{0}^{2}r^{2}\rightarrow 1\;\;\text{and}\;\;\func{Im}f\left( r\mathrm{e}^{%
\mathrm{i}\frac{3\pi }{2}}\right) \rightarrow 0,\;\;\text{as}%
\;\;r\rightarrow 0.
\end{equation*}%
We have that also (\ref{re-im}) holds for $\varphi \in \left[ \frac{\pi }{2},%
\frac{3\pi }{2}\right] ,$ as $r\rightarrow 0,$ so that $\Delta \arg f\left(
s\right) =0$ for $s\in \gamma _{L4},$ as $r\rightarrow 0.$ Thus, the
conclusion is that%
\begin{equation*}
\Delta \arg f\left( s\right) =4\pi \;\;\text{for}\;\;s\in \gamma _{L},\;\;%
\text{as}\;\;r\rightarrow 0,\;\;R\rightarrow \infty .
\end{equation*}%
This implies that $f$ has two zeros in the left complex half-plane.
\end{proof}

The following theorem is related to the existence of solutions to system (%
\ref{sys-lajt-ode-1}) - (\ref{IC-lajt-ode}).

\begin{theorem}
\label{thm}Let $M$ satisfy Condition \ref{cond-1} with $x_{0}=0$ and
Assumption \ref{cond}. Let $F\in \mathcal{S}_{+}^{\prime }.$

\begin{enumerate}
\item[$\left( i\right) $] The displacement $u\ $as a part of the solution to
(\ref{sys-lajt-ode-1}) - (\ref{IC-lajt-ode}), is given by%
\begin{equation}
u\left( x,t\right) =x\varepsilon \left( t\right) ,\;\;\text{where}%
\;\;\varepsilon \left( t\right) =F\left( t\right) \ast P\left( t\right)
,\;\;x\in \left[ 0,1\right] ,\;t>0,  \label{u-eps}
\end{equation}%
and%
\begin{eqnarray}
P\left( t\right)  &=&\frac{1}{\pi }\dint\nolimits_{0}^{\infty }\func{Im}%
\left( \frac{M^{2}\left( q\mathrm{e}^{-\mathrm{i}\pi }\right) }{1+\left(
qM\left( q\mathrm{e}^{-\mathrm{i}\pi }\right) \right) ^{2}}\right) \mathrm{e}%
^{-qt}\mathrm{d}q+2\func{Re}\left( \func{Res}\left( \tilde{P}\left( s\right)
\mathrm{e}^{st},s_{0}\right) \right) ,\;\;t>0,  \notag \\
&&  \label{P1} \\
P\left( t\right)  &=&0,\;\;t<0.  \notag
\end{eqnarray}%
The residue term in (\ref{P1}) is given by%
\begin{equation}
\func{Res}\left( \tilde{P}\left( s\right) \mathrm{e}^{st},s_{0}\right) =%
\left[ \frac{M^{2}\left( s\right) }{\frac{\mathrm{d}}{\mathrm{d}s}f\left(
s\right) }\mathrm{e}^{st}\right] _{s=s_{0}},\;\;t>0,  \label{res-P}
\end{equation}%
where $f$ is given by (\ref{polovi-l}) and $s_{0}$ is the zero of $f.$
Function $P$ is real-valued, continuous on $\left[ 0,\infty \right) $ and $%
u\in C\left( \left[ 0,1\right] ,\mathcal{S}_{+}^{\prime }\right) .$
Moreover, if $F$ is locally integrable on $%
\mathbb{R}
$ (and equals zero on $\left( -\infty ,0\right] $) then $u\in C\left( \left[
0,1\right] \times \left[ 0,\infty \right) \right) .$

\item[$\left( ii\right) $] The stress $\sigma $ as a part of the solution to
(\ref{sys-lajt-ode-1}) - (\ref{IC-lajt-ode}) is given by%
\begin{equation*}
\sigma \left( t\right) =F\left( t\right) \ast Q\left( t\right) ,\;\;x\in
\left[ 0,1\right] ,\;t>0,
\end{equation*}%
where%
\begin{eqnarray}
Q\left( t\right)  &=&\frac{1}{\pi }\dint\nolimits_{0}^{\infty }\func{Im}%
\left( \frac{1}{1+\left( qM\left( q\mathrm{e}^{-\mathrm{i}\pi }\right)
\right) ^{2}}\right) \mathrm{e}^{-qt}\mathrm{d}q+2\func{Re}\left( \func{Res}%
\left( \tilde{Q}\left( s\right) \mathrm{e}^{st},s_{0}\right) \right)
,\;\;t>0,  \notag \\
&&  \label{Q11} \\
Q\left( t\right)  &=&0,\;\;t<0.  \notag
\end{eqnarray}%
The residue term in (\ref{Q11}) is given by%
\begin{equation}
\func{Res}\left( \tilde{Q}\left( s\right) \mathrm{e}^{st},s_{0}\right) =%
\left[ \frac{1}{\frac{\mathrm{d}}{\mathrm{d}s}f\left( s\right) }\mathrm{e}%
^{st}\right] _{s=s_{0}},\;\;t>0,  \label{res-Q}
\end{equation}%
where $f$ is given by (\ref{polovi-l}) and $s_{0}$ is the zero of $f.$
Function $Q$ is real-valued, continuous on $\left[ 0,\infty \right) .$
Moreover, if $F$ is locally integrable on $%
\mathbb{R}
$ (and equals zero on $\left( -\infty ,0\right] $) then $\sigma \in C\left( %
\left[ 0,\infty \right) \right) .$
\end{enumerate}
\end{theorem}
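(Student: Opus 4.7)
The strategy is to invert the Laplace transforms in (\ref{P-l}) by deforming the Bromwich contour onto the branch cut of $M$. The function $M$ has a branch point at $s=0$, and I would take the branch cut along $(-\infty,0]$. The Bromwich line $\operatorname{Re}s=c$ (with $c$ chosen large enough that it lies in the region where Condition \ref{cond-1}$(i)$ applies and to the right of all singularities) is closed by a large circular arc of radius $R$ in the left half-plane joined to the two edges of the branch cut and a small circle of radius $r$ about the origin. By Proposition \ref{pr-l} the only poles enclosed are the simple zeros $s_0$ and $\bar{s}_0$ of $f$, so their residue contribution is the standard expression (\ref{res-P}), which exists since simplicity of the zeros gives $f'(s_{0})\neq 0$.

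Next, using Assumption \ref{cond} ($M(s)\to c_{\infty}$ at infinity and $M(s)\to c_{0}$ at zero), I would estimate the auxiliary pieces of the contour. On the large arc, $\tilde{P}(s)=O(|s|^{-2})$, so the corresponding integral vanishes as $R\to\infty$ by Jordan's lemma (the factor $\mathrm{e}^{st}$ provides exponential decay on the open left half-plane for $t>0$). On the small circle about the origin, $\tilde{P}(s)\to c_{0}^{2}$ stays bounded, so that contribution is $O(r)$ and vanishes as $r\to 0$. On the two horizontal edges of the branch cut, parameterising $s=q\mathrm{e}^{\pm\mathrm{i}\pi}$ with $q\in[r,R]$, the conjugation identity $M(\bar{s})=\overline{M(s)}$ yields $\tilde{P}(q\mathrm{e}^{\mathrm{i}\pi})=\overline{\tilde{P}(q\mathrm{e}^{-\mathrm{i}\pi})}$, so the two edge contributions combine to $\tfrac{1}{\pi}\int_{0}^{\infty}\operatorname{Im}\tilde{P}(q\mathrm{e}^{-\mathrm{i}\pi})\,\mathrm{e}^{-qt}\mathrm{d}q$. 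The residues at $s_{0}$ and $\bar{s}_{0}$ sum, by the same conjugation, to $2\operatorname{Re}\operatorname{Res}(\tilde{P}(s)\mathrm{e}^{st},s_{0})$, and assembling the pieces via the residue theorem yields (\ref{P1}).

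Third, I would verify the regularity claims. Real-valuedness of $P$ is immediate from (\ref{P1}). Continuity on $[0,\infty)$ follows by dominated convergence once one checks that the density $q\mapsto\operatorname{Im}\tilde{P}(q\mathrm{e}^{-\mathrm{i}\pi})\,\mathrm{e}^{-qt}$ is integrable uniformly in $t$ on compact intervals: by Assumption \ref{cond} the density tends to a finite limit as $q\to 0^{+}$ (so integrability near $0$ is clear) and decays like $q^{-2}$ as $q\to\infty$, while the residue term is smooth in $t$. Since $P$ is bounded and continuous on $[0,\infty)$, the tempered distribution $P$ belongs to $\mathcal{S}'_{+}$, so $\varepsilon=F\ast P\in\mathcal{S}'_{+}$ for every $F\in\mathcal{S}'_{+}$, giving $u\in C([0,1],\mathcal{S}'_{+})$ through $u(x,t)=x\varepsilon(t)$. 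If in addition $F\in L^{1}_{\mathrm{loc}}(\mathbb{R})$ with $F=0$ on $(-\infty,0]$, then $F\ast P$ is a classical convolution of a locally integrable function with a continuous bounded one, hence continuous on $[0,\infty)$, so $u\in C([0,1]\times[0,\infty))$.

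The argument for $\sigma$ is structurally identical, with $\tilde{Q}(s)=1/f(s)$ in place of $\tilde{P}$; the decay on the large arc is in fact stronger, and the limiting behaviour at the origin is $\tilde{Q}(s)\to 1$, so no new analytic issue arises. The main obstacle I anticipate is upgrading the pointwise asymptotics of Assumption \ref{cond} to estimates that are uniform in $\arg s$ on the large and small arcs; this requires only a short continuity/compactness argument for $M$ on the slit plane together with the explicit limits $c_{\infty},c_{0}$, but it is the step on which the vanishing of the arc integrals rests. A secondary technical point is the integrability of $\operatorname{Im}\tilde{P}(q\mathrm{e}^{-\mathrm{i}\pi})$ near $q=0$, where one relies on the nondegeneracy $c_{0}>0$ to keep the denominator bounded away from zero there.
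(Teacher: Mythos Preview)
Your proposal is correct and follows essentially the same approach as the paper: the paper deforms the Bromwich line onto a keyhole contour around the branch cut $(-\infty,0]$, uses Assumption~\ref{cond} to obtain the $O(|s|^{-2})$ decay that kills the large-arc and horizontal-segment integrals, the boundedness $\tilde P(s)\to c_0^2$ to kill the small circle, and then combines the two edge integrals and the two conjugate residues exactly as you describe. The only cosmetic difference is that the paper invokes Condition~\ref{cond-1}$(i)$ (via Doetsch's theorem) for real-valuedness of $P$, whereas you read it off directly from the explicit formula~(\ref{P1}); both are fine.
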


\begin{proof}
We prove $\left( i\right) $ of Theorem. Function $P$ is real-valued by $%
\left( i\right) $ of Condition \ref{cond-1}. We calculate $P\left( t\right) ,
$ $t\in
\mathbb{R}
,$ by the integration over the contour given in Figure \ref{fig}.
\begin{figure}[h]
\centering
\includegraphics[scale=0.65]{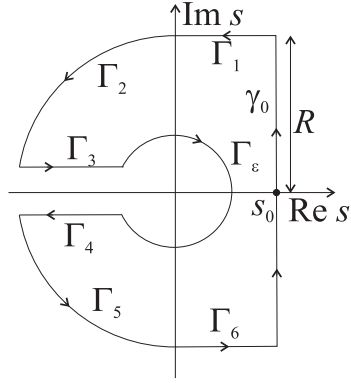}
\caption{Integration contour $\Gamma $}
\label{fig}
\end{figure}
The Cauchy residues theorem yields
\begin{equation}
\oint\nolimits_{\Gamma }\tilde{P}\left( s\right) \mathrm{e}^{st}\mathrm{d}%
s=2\pi \mathrm{i}\left( \func{Res}\left( \tilde{P}\left( s\right) \mathrm{e}%
^{st},s_{0}\right) +\func{Res}\left( \tilde{P}\left( s\right) \mathrm{e}%
^{st},\bar{s}_{0}\right) \right) ,  \label{KF-P}
\end{equation}%
where $\Gamma =\Gamma _{1}\cup \Gamma _{2}\cup \Gamma _{3}\cup \Gamma
_{4}\cup \Gamma _{5}\cup \Gamma _{6}\cup \Gamma _{7}\cup \gamma _{0},$ so
that poles of $\tilde{P},$ given by (\ref{P-l}), lie inside the contour $%
\Gamma .$ Proposition \ref{pr-l} implies that the pole $s_{0},$ and its
complex conjugate $\bar{s}_{0},$ of $\tilde{P}$ are simple. Then the
residues in (\ref{KF-P}) can be calculated using (\ref{res-P}).

Now, we calculate the integral over $\Gamma $ in (\ref{KF-P}). First, we
consider the integral along contour $\Gamma _{1}=\left\{ s=p+\mathrm{i}R\mid
p\in \left[ 0,s_{0}\right] ,\;R>0\right\} ,$ where $R$ is such that the
poles $s_{0}$ and $\bar{s}_{0}$ lie inside the contour $\Gamma .$ By (\ref%
{P-l}) and Assumption \ref{cond} we have
\begin{equation}
\left\vert \tilde{P}\left( s\right) \right\vert \leq \frac{C}{\left\vert
s\right\vert ^{2}},\;\;\left\vert s\right\vert \rightarrow \infty .
\label{P-s-besk}
\end{equation}%
Using (\ref{P-s-besk}), we calculate the integral over $\Gamma _{1}$ as%
\begin{eqnarray*}
\lim_{R\rightarrow \infty }\left\vert \int\nolimits_{\Gamma _{1}}\tilde{P}%
\left( s\right) \mathrm{e}^{st}\mathrm{d}s\right\vert &\leq
&\lim_{R\rightarrow \infty }\int_{0}^{s_{0}}\left\vert \tilde{P}\left( p+%
\mathrm{i}R\right) \right\vert \left\vert \mathrm{e}^{\left( p+\mathrm{i}%
R\right) t}\right\vert \mathrm{d}p \\
&\leq &C\lim_{R\rightarrow \infty }\int_{0}^{s_{0}}\frac{1}{R^{2}}\mathrm{e}%
^{pt}\mathrm{d}p=0,\;\;t>0.
\end{eqnarray*}%
Similar arguments are valid for the integral along contour $\Gamma _{7}$:%
\begin{equation*}
\lim\limits_{R\rightarrow \infty }\left\vert \int\nolimits_{\Gamma _{7}}%
\tilde{P}\left( s\right) \mathrm{e}^{st}\mathrm{d}s\right\vert =0,\;\;t>0.
\end{equation*}

Next, we consider the integral along contour $\Gamma _{2}.$ By (\ref%
{P-s-besk}) we obtain%
\begin{eqnarray*}
\lim_{R\rightarrow \infty }\left\vert \int\nolimits_{\Gamma _{2}}\tilde{P}%
\left( s\right) \mathrm{e}^{st}\mathrm{d}s\right\vert &\leq
&\lim_{R\rightarrow \infty }\int\nolimits_{\frac{\pi }{2}}^{\pi }\left\vert
\tilde{P}\left( R\mathrm{e}^{\mathrm{i}\phi }\right) \right\vert \left\vert
\mathrm{e}^{Rt\mathrm{e}^{\mathrm{i}\phi }}\right\vert \left\vert \mathrm{i}R%
\mathrm{e}^{\mathrm{i}\phi }\right\vert \mathrm{d}\phi \\
&\leq &C\lim_{R\rightarrow \infty }\int\nolimits_{\frac{\pi }{2}}^{\pi }%
\frac{1}{R}\mathrm{e}^{Rt\cos \phi }\mathrm{d}\phi =0,\;\;t>0,
\end{eqnarray*}%
since $\cos \phi \leq 0$ for $\phi \in \left[ \frac{\pi }{2},\pi \right] .$
Similarly, we have%
\begin{equation*}
\lim\limits_{R\rightarrow \infty }\left\vert \int\nolimits_{\Gamma _{6}}%
\tilde{P}\left( s\right) \mathrm{e}^{st}\mathrm{d}s\right\vert =0,\;\;t>0.
\end{equation*}

Consider the integral along $\Gamma _{4}.$ Let $\left\vert s\right\vert
\rightarrow 0.$ Then, by Assumption \ref{cond}, $M\left( s\right)
\rightarrow c_{0}$ and $sM\left( s\right) \rightarrow 0.$ Hence, from (\ref%
{P-l}) we have
\begin{equation}
\left\vert \tilde{P}\left( s\right) \right\vert \approx \left\vert M\left(
s\right) \right\vert ^{2}\approx c_{0}^{2},\;\;\text{as}\;\;|s|\rightarrow 0.
\label{P za s nula}
\end{equation}%
The integration along contour $\Gamma _{4}$ gives%
\begin{eqnarray*}
\lim_{r\rightarrow 0}\left\vert \int\nolimits_{\Gamma _{4}}\tilde{P}\left(
s\right) \mathrm{e}^{st}\mathrm{d}s\right\vert &\leq &\lim_{r\rightarrow
0}\int\nolimits_{-\pi }^{\pi }\left\vert \tilde{P}\left( r\mathrm{e}^{%
\mathrm{i}\phi }\right) \right\vert \left\vert \mathrm{e}^{rt\mathrm{e}^{%
\mathrm{i}\phi }}\right\vert \left\vert \mathrm{i}r\mathrm{e}^{\mathrm{i}%
\phi }\right\vert \mathrm{d}\phi \\
&\leq &c_{0}^{2}\lim_{r\rightarrow 0}\int\nolimits_{-\pi }^{\pi }r\mathrm{e}%
^{rt\cos \phi }\mathrm{d}\phi =0,\;\;t>0,
\end{eqnarray*}%
where we used (\ref{P za s nula}).

Integrals along $\Gamma _{3},$ $\Gamma _{5}$ and $\gamma _{0}$ give ($t>0$)%
\begin{eqnarray}
\lim_{\substack{ R\rightarrow \infty  \\ r\rightarrow 0}}\int\nolimits_{%
\Gamma _{3}}\tilde{P}\left( s\right) \mathrm{e}^{st}\mathrm{d}s
&=&\int\nolimits_{0}^{\infty }\frac{M^{2}\left( q\mathrm{e}^{\mathrm{i}\pi
}\right) }{1+\left( qM\left( q\mathrm{e}^{\mathrm{i}\pi }\right) \right) ^{2}%
}\mathrm{e}^{-qt}\mathrm{d}q,  \label{P-plus} \\
\lim_{\substack{ R\rightarrow \infty  \\ r\rightarrow 0}}\int\nolimits_{%
\Gamma _{5}}\tilde{P}\left( s\right) \mathrm{e}^{st}\mathrm{d}s
&=&-\int\nolimits_{0}^{\infty }\frac{M^{2}\left( q\mathrm{e}^{-\mathrm{i}\pi
}\right) }{1+\left( qM\left( q\mathrm{e}^{-\mathrm{i}\pi }\right) \right)
^{2}}\mathrm{e}^{-qt}\mathrm{d}q,  \label{P-minus} \\
\lim_{R\rightarrow \infty }\int\nolimits_{\gamma _{0}}\tilde{P}\left(
s\right) \mathrm{e}^{st}\mathrm{d}s &=&2\pi \mathrm{i}P\left( t\right) .
\label{P-pi}
\end{eqnarray}%
We note that (\ref{P-pi}) is valid if the inversion of the Laplace transform
exists, which is true since all the singularities of $\tilde{P}$ are left
from the line $\gamma _{0}$ and the estimates on $\tilde{P}$ over $\gamma
_{0}$ imply the convergence of the integral. Summing up (\ref{P-plus}), (\ref%
{P-minus}) and (\ref{P-pi}) we obtain the left hand side of (\ref{KF-P}) and
finally $P$ in the form given by (\ref{P1}). Analyzing separately%
\begin{equation*}
\frac{1}{\pi }\dint\nolimits_{0}^{\infty }\func{Im}\left( \frac{M^{2}\left( q%
\mathrm{e}^{-\mathrm{i}\pi }\right) }{1+\left( qM\left( q\mathrm{e}^{-%
\mathrm{i}\pi }\right) \right) ^{2}}\right) \mathrm{e}^{-qt}\mathrm{d}%
q,\;\;\;\;2\sum_{n=1}^{\infty }\func{Re}\left( \func{Res}\left( \tilde{P}%
\left( s\right) \mathrm{e}^{st},s_{0}\right) \right) ,
\end{equation*}%
we conclude that both terms appearing in (\ref{P1}) are continuous functions
on $t\in \left[ 0,\infty \right) .$ This implies that $u$ is a continuous
function on $\left[ 0,1\right] \times \left[ 0,\infty \right) .$ From the
uniqueness of the Laplace transform it follows that $u$ is unique. Since $F$
belongs to $\mathcal{S}_{+}^{\prime }$, it follows that%
\begin{equation*}
u\left( x,\cdot \right) =x\left( F\left( \cdot \right) \ast P\left( \cdot
\right) \right) \in \mathcal{S}_{+}^{\prime },
\end{equation*}%
for every $x\in \left[ 0,1\right] $ and $u\in C\left( \left[ 0,1\right] ,%
\mathcal{S}_{+}^{\prime }\right) .$ Moreover, if $F\in L_{loc}^{1}\left( %
\left[ 0,\infty \right) \right) ,$ then $u\in C\left( \left[ 0,1\right]
\times \left[ 0,\infty \right) \right) ,$ since $P$ is continuous.

Now we prove $\left( ii\right) $ of Theorem. Again, $\left( i\right) $ of
Condition \ref{cond-1} ensures that $Q$ is a real-valued function. We
calculate $Q\left( t\right) ,$ $t\in
\mathbb{R}
,$ by the integration over the same contour from Figure \ref{fig}. The
Cauchy residues theorem yields
\begin{equation}
\oint\nolimits_{\Gamma }\tilde{Q}\left( s\right) \mathrm{e}^{st}\mathrm{d}%
s=2\pi \mathrm{i}\left( \func{Res}\left( \tilde{Q}\left( s\right) \mathrm{e}%
^{st},s_{0}\right) +\func{Res}\left( \tilde{Q}\left( s\right) \mathrm{e}%
^{st},\bar{s}_{0}\right) \right) ,  \label{KF-Q}
\end{equation}%
so that poles of $\tilde{Q}$ lie inside the contour $\Gamma $. The poles $%
s_{0}$ and $\bar{s}_{0}$ of $\tilde{Q},$ given by (\ref{P-l}) are the same
as for the function $\tilde{P}.$ Since the poles $s_{0}$ and $\bar{s}_{0}$
are simple, the residues in (\ref{KF-Q}) can be calculated using (\ref{res-Q}%
).

Let us calculate the integral over $\Gamma $ in (\ref{KF-Q}). Consider the
integral along contour%
\begin{equation*}
\Gamma _{1}=\left\{ s=p+\mathrm{i}R\mid p\in \left[ 0,s_{0}\right]
,\;R>0\right\} .
\end{equation*}%
By (\ref{P-l}) and Assumption \ref{cond}, we have
\begin{equation}
\tilde{Q}\left( s\right) \leq \frac{C}{\left\vert s\right\vert ^{2}}%
,\;\;\left\vert s\right\vert \rightarrow \infty .  \label{estim}
\end{equation}%
Using (\ref{estim}) we calculate the integral over $\Gamma _{1}$ as
\begin{eqnarray*}
\lim_{R\rightarrow \infty }\left\vert \int\nolimits_{\Gamma _{1}}\tilde{Q}%
\left( s\right) \mathrm{e}^{st}\mathrm{d}s\right\vert &\leq
&\lim_{R\rightarrow \infty }\int_{0}^{s_{0}}\left\vert \tilde{Q}\left( p+%
\mathrm{i}R\right) \right\vert \left\vert \mathrm{e}^{\left( p+\mathrm{i}%
R\right) t}\right\vert \mathrm{d}p \\
&\leq &C\lim_{R\rightarrow \infty }\int_{0}^{s_{0}}\frac{1}{R^{2}}\mathrm{e}%
^{pt}\mathrm{d}p=0,\;\;t>0,
\end{eqnarray*}%
Similar arguments are valid for the integral along $\Gamma _{7}.$ Thus, we
have%
\begin{equation*}
\lim\limits_{R\rightarrow \infty }\left\vert \int\nolimits_{\Gamma _{7}}%
\tilde{Q}\left( s\right) \mathrm{e}^{st}\mathrm{d}s\right\vert =0,\;\;t>0.
\end{equation*}

With (\ref{estim}) we have that the integral over $\Gamma _{2}$ becomes
\begin{eqnarray*}
\lim_{R\rightarrow \infty }\left\vert \int\nolimits_{\Gamma _{2}}\tilde{Q}%
\left( s\right) \mathrm{e}^{st}\mathrm{d}s\right\vert &\leq
&\lim_{R\rightarrow \infty }\int\nolimits_{\frac{\pi }{2}}^{\pi }\left\vert
\tilde{Q}\left( R\mathrm{e}^{\mathrm{i}\phi }\right) \right\vert \left\vert
\mathrm{e}^{Rt\mathrm{e}^{\mathrm{i}\phi }}\right\vert \left\vert \mathrm{i}R%
\mathrm{e}^{\mathrm{i}\phi }\right\vert \mathrm{d}\phi \\
&\leq &C\lim_{R\rightarrow \infty }\int\nolimits_{\frac{\pi }{2}}^{\pi }%
\frac{1}{R}\mathrm{e}^{Rt\cos \phi }\mathrm{d}\phi =0,\;\;t>0,
\end{eqnarray*}%
since $\cos \phi \leq 0$ for $\phi \in \left[ \frac{\pi }{2},\pi \right] .$
Similar arguments are valid for the integral along $\Gamma _{6}$:%
\begin{equation*}
\lim\limits_{R\rightarrow \infty }\left\vert \int\nolimits_{\Gamma _{6}}%
\tilde{Q}\left( s\right) \mathrm{e}^{st}\mathrm{d}s\right\vert =0,\;\;t>0.
\end{equation*}

Since $M\left( s\right) \rightarrow c_{0}$ and $sM\left( s\right)
\rightarrow 0$ as $\left\vert s\right\vert \rightarrow 0,$ (\ref{P-l})
implies $\tilde{Q}\left( s\right) \approx 1,$ as $|s|\rightarrow 0,$ so that
the integration along contour $\Gamma _{4}$ gives%
\begin{eqnarray*}
\lim_{r\rightarrow 0}\left\vert \int\nolimits_{\Gamma _{4}}\tilde{Q}\left(
s\right) \mathrm{e}^{st}\mathrm{d}s\right\vert &\leq &\lim_{r\rightarrow
0}\int\nolimits_{-\pi }^{\pi }\left\vert \tilde{Q}\left( r\mathrm{e}^{%
\mathrm{i}\phi }\right) \right\vert \left\vert \mathrm{e}^{rt\mathrm{e}^{%
\mathrm{i}\phi }}\right\vert \left\vert \mathrm{i}r\mathrm{e}^{\mathrm{i}%
\phi }\right\vert \mathrm{d}\phi \\
&\leq &\lim_{r\rightarrow 0}\int\nolimits_{-\pi }^{\pi }r\mathrm{e}^{rt\cos
\phi }\mathrm{d}\phi =0,\;\;t>0.
\end{eqnarray*}

Integrals along $\Gamma _{3},$ $\Gamma _{5}$ and $\gamma _{0}$ give ($t>0$)%
\begin{eqnarray}
\lim_{\substack{ R\rightarrow \infty  \\ r\rightarrow 0}}\int\nolimits_{%
\Gamma _{3}}\tilde{Q}\left( s\right) \mathrm{e}^{st}\mathrm{d}s
&=&\int\nolimits_{0}^{\infty }\frac{1}{1+\left( qM\left( q\mathrm{e}^{%
\mathrm{i}\pi }\right) \right) ^{2}}\mathrm{e}^{-qt}\mathrm{d}q,
\label{Q-plus} \\
\lim_{\substack{ R\rightarrow \infty  \\ r\rightarrow 0}}\int\nolimits_{%
\Gamma _{5}}\tilde{Q}\left( s\right) \mathrm{e}^{st}\mathrm{d}s
&=&-\int\nolimits_{0}^{\infty }\frac{1}{1+\left( qM\left( q\mathrm{e}^{-%
\mathrm{i}\pi }\right) \right) ^{2}}\mathrm{e}^{-qt}\mathrm{d}q,
\label{Q-minus} \\
\lim_{R\rightarrow \infty }\int\nolimits_{\gamma _{0}}\tilde{Q}\left(
s\right) \mathrm{e}^{st}\mathrm{d}s &=&2\pi \mathrm{i}Q\left( t\right) .
\label{Q-pi}
\end{eqnarray}%
By the same arguments as in the proof of $\left( i\right) $ we have that (%
\ref{Q-pi}) is valid if the inversion of the Laplace transform exists. This
is true since all the singularities of $\tilde{Q}$ are left from the line $%
\gamma _{0}$ and appropriate estimates on $\tilde{Q}$ are satisfied. Adding (%
\ref{Q-plus}), (\ref{Q-minus}) and (\ref{Q-pi}) we obtain the left hand side
of (\ref{KF-Q}) and finally $Q$ in the form given by (\ref{Q11}).

Since%
\begin{equation*}
\frac{1}{\pi }\dint\nolimits_{0}^{\infty }\func{Im}\left( \frac{1}{1+\left(
qM\left( q\mathrm{e}^{-\mathrm{i}\pi }\right) \right) ^{2}}\right) \mathrm{e}%
^{-qt}\mathrm{d}q,\;\;\;\;2\func{Res}\left( \tilde{Q}\left( s\right) \mathrm{%
e}^{st},s_{0}\right) ,\;\;t>0,
\end{equation*}%
are continuous, it follows that $Q$ is continuous on $\left[ 0,\infty
\right) .$
\end{proof}

\section{Example}

Suppose that $F$ is harmonic, i.e., $F\left( t\right) =F_{0}\cos \left(
\omega t\right) ,$ and that%
\begin{equation}
\phi _{\sigma }\left( \gamma \right) =\delta \left( \gamma \right)
+a\,\delta \left( \alpha -\gamma \right) ,\;\;\;\;\phi _{\varepsilon }\left(
\gamma \right) =\delta \left( \gamma \right) +b\,\delta \left( \alpha
-\gamma \right)   \label{zener}
\end{equation}%
in (\ref{sys-lajt-ode-1}). This choice of $\phi _{\sigma }$ and $\phi
_{\varepsilon }$ corresponds to the fractional Zener model (\ref{Zener}).
Then $\tilde{\varepsilon}$ and $\tilde{\sigma},$ given by (\ref{eps-sig}),
become
\begin{equation*}
\tilde{\varepsilon}\left( s\right) =F_{0}\frac{s}{s^{2}+\omega ^{2}}\frac{%
\frac{1+as^{\alpha }}{1+bs^{\alpha }}}{1+s^{2}\frac{1+as^{\alpha }}{%
1+bs^{\alpha }}},\;\;\;\;\tilde{\sigma}\left( s\right) =F_{0}\frac{s}{%
s^{2}+\omega ^{2}}\frac{1}{1+s^{2}\frac{1+as^{\alpha }}{1+bs^{\alpha }}}%
,\;\;s\in D.
\end{equation*}%
In the special case $a=b,$ which corresponds to an elastic body, we obtain%
\begin{equation*}
\tilde{\varepsilon}\left( s\right) =F_{0}\frac{s}{s^{2}+\omega ^{2}}\frac{1}{%
1+s^{2}},\;\;\;\;\tilde{\sigma}\left( s\right) =F_{0}\frac{s}{s^{2}+\omega
^{2}}\frac{1}{1+s^{2}}.
\end{equation*}%
After inverting the Laplace transforms, we have
\begin{gather}
\varepsilon \left( t\right) =\frac{F_{0}}{\omega ^{2}-1}\cos \left( \omega
t\right) \ast \sin t,\;\;\;\;\sigma \left( t\right) =\frac{F_{0}}{\omega
^{2}-1}\cos \left( \omega t\right) \ast \sin t,  \label{el-eps-1} \\
\varepsilon \left( t\right) =\frac{2F_{0}}{\omega ^{2}-1}\sin \frac{\left(
\omega +1\right) t}{2}\sin \frac{\left( \omega -1\right) t}{2}%
,\;\;\;\;\sigma \left( t\right) =\frac{2F_{0}}{\omega ^{2}-1}\sin \frac{%
\left( \omega +1\right) t}{2}\sin \frac{\left( \omega -1\right) t}{2}.
\notag  \label{el-eps-2}
\end{gather}%
For $\omega \rightarrow 1$ we obtain a resonance. In this case (\ref%
{el-eps-1}) become%
\begin{equation*}
\varepsilon \left( t\right) =\frac{1}{2}t\sin t,\;\;\;\;\sigma \left(
t\right) =\frac{1}{2}t\sin t,
\end{equation*}%
Also, in the case when $\omega \approx 1$ one observes the pulsation.

We shall present several plots of $\varepsilon ,$ whose explicit form is
obtained in Theorem \ref{thm}, (\ref{u-eps}), in the case of the fractional
Zener model of the viscoelastic body (\ref{zener}). We fix the parameters of
the model: $a=0.2,$ $b=0.6$ and in following figures we present plots of $%
\varepsilon $ in the cases when the forcing term is given as $F=\delta ,$ $%
F=H$ ($H$ is the Heaviside function) and $F\left( t\right) =\cos \left(
\omega t\right) .$

Let $F=\delta .$ We see from Figure \ref{p-f-1}
\begin{figure}[htbp]
\centering
\includegraphics[scale=1]{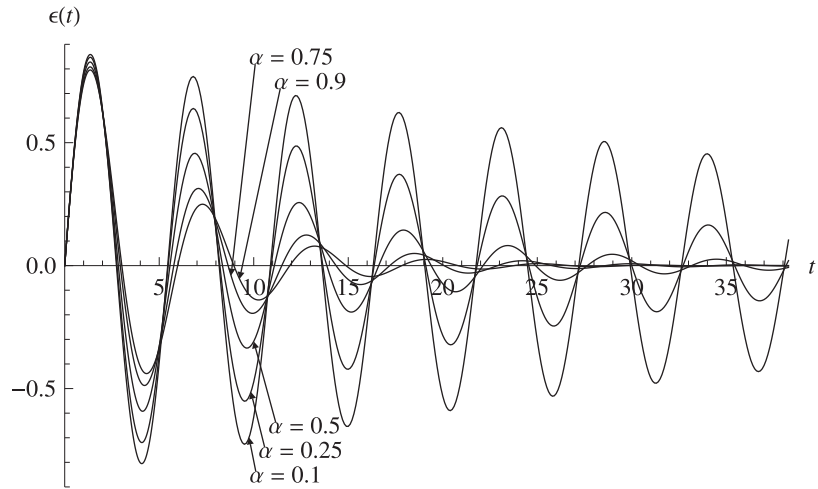}
\caption{Strain $\protect\varepsilon (t)$ in the case $F=\protect\delta $ as
a function of time $t\in (0,38)$.}
\label{p-f-1}
\end{figure}
that the oscillations of the body are damped, since the rod is viscoelastic.
The curve resembles to the curve of the damped oscillations of the linear
harmonic oscillator. We also see that the change of $\alpha \in \left\{
0.1,0.25,0.5,0.75,0.9\right\} $ makes the amplitudes of the curves to
decrease slower with the time, as $\alpha $ becomes smaller. This is due the
the fact that $\alpha =1$ corresponds to the standard linear viscoelastic
body and $\alpha =0$ corresponds to the elastic body. In the case of the
forcing term given as the Heaviside function, from Figures \ref{p-f-2} and %
\ref{p-f-2a}
\begin{figure}[htbp]
\begin{minipage}{72mm}
 \centering
 \includegraphics[scale=0.7]{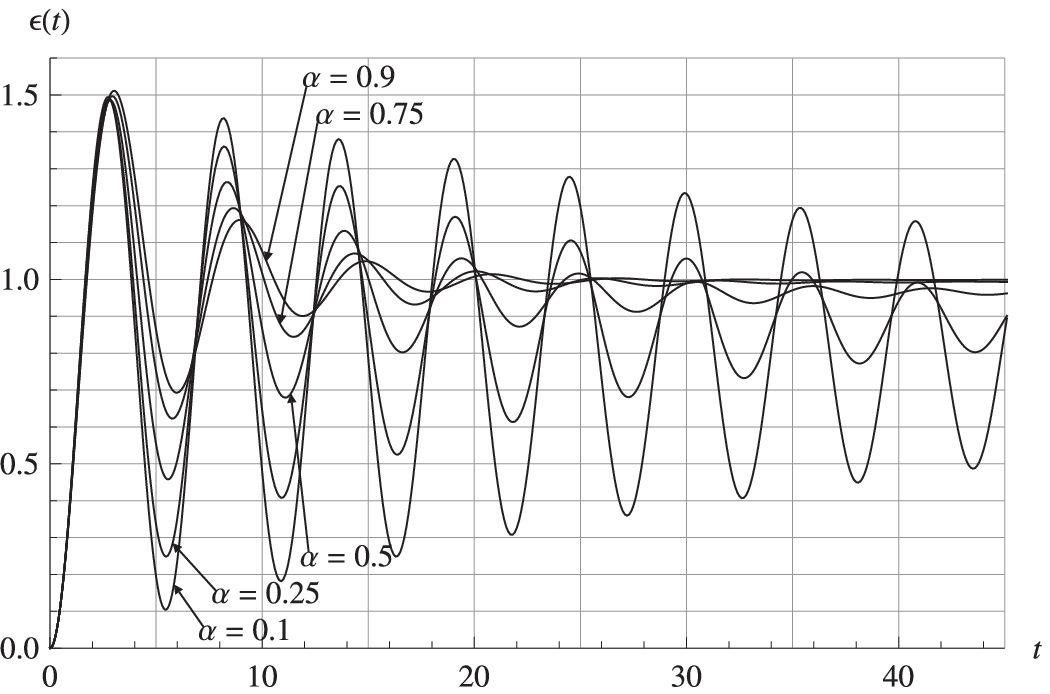}
 \caption{Strain $\protect\varepsilon(t)$ in the case $F=H$ as
a function of time $t\in (0,45)$.}
 \label{p-f-2}
 \end{minipage}
\hfil
\begin{minipage}{72mm}
 \centering
 \includegraphics[scale=0.7]{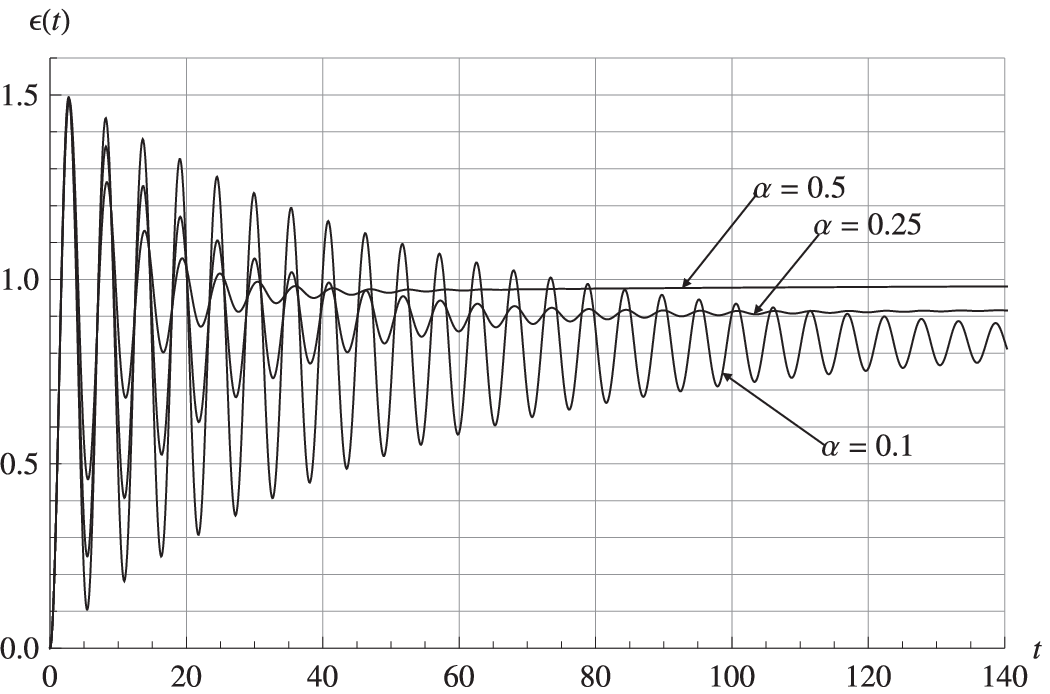}
 \caption{Strain $\protect\varepsilon(t)$ in the case $F=H$ as
a function of time $t\in (0,140)$.}
 \label{p-f-2a}
 \end{minipage}
\end{figure}
we observe that the body creeps to the finite value of the displacement
regardless of the value of $\alpha \in \left\{ 0.1,0.25,0.5,0.75,0.9\right\}
.$ Creeping to the finite value of the displacement is due to the fact that
the fractional Zener model describes the solid-like viscoelastic material.
If the viscoelastic properties of the material are dominant (the value of $%
\alpha $ is closer to one) then the time required for body to reach the
limiting value of strain is smaller, see Figure \ref{p-f-2}, compared to
time in the case when elastic properties of the material are dominant, see
Figure \ref{p-f-2a}. Figure \ref{p-f-3}
\begin{figure}[p]
\centering
\includegraphics[scale=1]{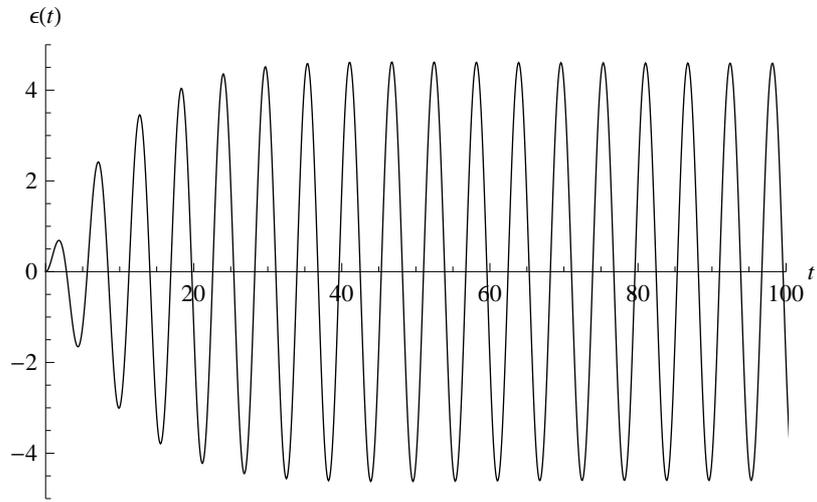}
\caption{Strain $\protect\varepsilon (t)$ in the case $F(t)=\cos (\protect%
\omega t)$ as a function of time $t\in (0,100)$.}
\label{p-f-3}
\end{figure}
shows the expected behavior of the body in the case of the harmonic forcing
term. Namely, the oscillations of the body die out and the body oscillates
in the phase with the harmonic function. For this plot we took: $\alpha =0.45
$ and $\omega =1.1.$

Now, we examine the case when the values of the coefficients $a$ and $b$ are
close to each other. We fix them to be $a=0.58$ and $b=0.6.$ In this case
the elastic properties of the material prevail, since in the limiting case $%
a=b$ the fractional Zener model (\ref{Zener}) becomes the Hooke law for the
arbitrary value of $\alpha \in \left( 0,1\right) .$ We present in Figure \ref%
{p-f-5}
\begin{figure}[p]
\centering
\includegraphics[scale=1]{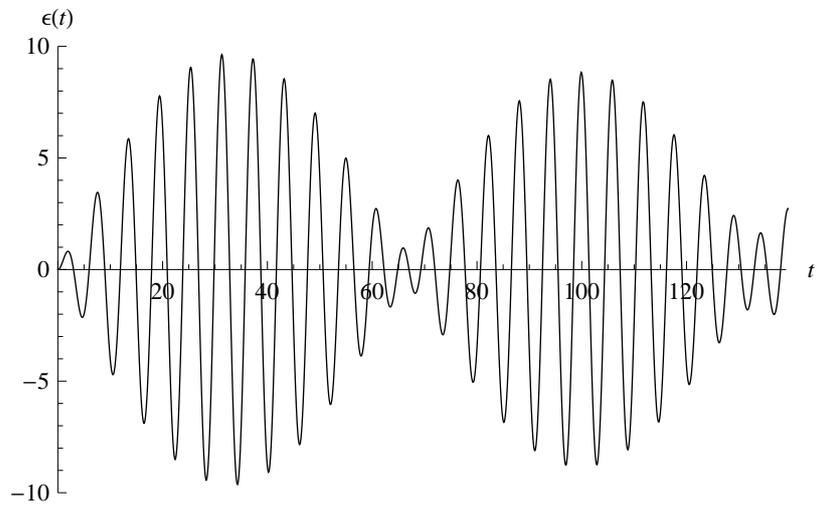}
\caption{Strain $\protect\varepsilon (t)$ in the case $F(t)=\cos (\protect%
\omega t)$ as a function of time $t\in (0,139)$.}
\label{p-f-5}
\end{figure}
the plot of $\varepsilon $ in the case when $\alpha =0.45$ and $\omega =1.1.$
In the elastic case, as it can be seen from (\ref{el-eps-1}), the frequency
of the free oscillations of the body is $\omega _{f}=1.$ Since the frequency
of the forcing function ($\omega =1.1$) is close to the frequency of the
free oscillations, we shall have the pulsation, as it can be observed from
Figure \ref{p-f-5}. Since there is still some damping left ($a\neq b$) the
amplitude of the wave-package decreases in time, as shown in Figure \ref%
{p-f-4}
\begin{figure}[p]
\centering
\includegraphics[scale=1]{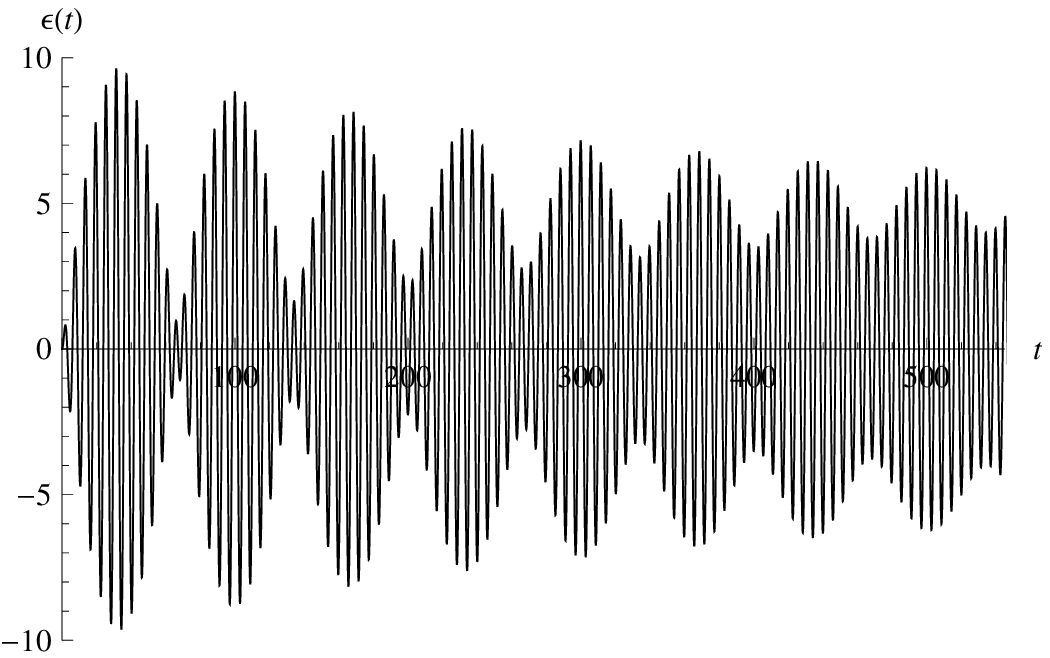}
\caption{Strain $\protect\varepsilon (t)$ in the case $F(t)=\cos (\protect%
\omega t)$ as a function of time $t\in (0,545)$.}
\label{p-f-4}
\end{figure}

We increase the damping effect by choosing $a=0.55,$ $b=0.6.$ The rest of
the parameters are: $\alpha =0.45,$ $\omega =1.1.$ The curve of $\varepsilon
,$ presented in Figure \ref{p-f-6}
\begin{figure}[p]
\centering
\includegraphics[scale=1]{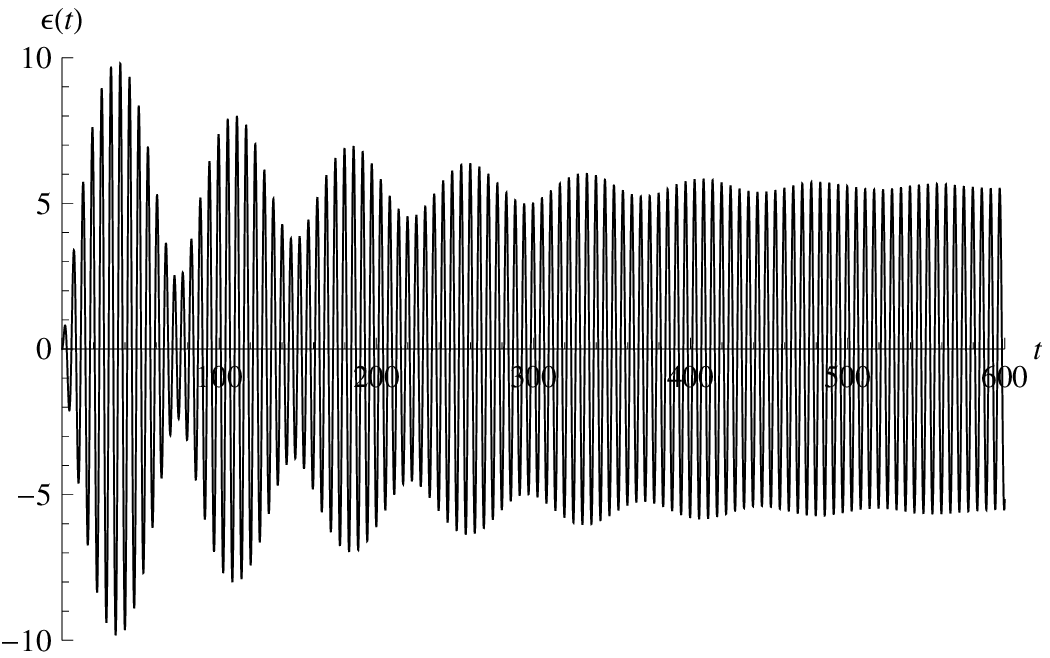}
\caption{Strain $\protect\varepsilon (t)$ in the case $F(t)=\cos (\protect%
\omega t)$ as a function of time $t\in (0,600)$.}
\label{p-f-6}
\end{figure}
for smaller times resembles to the curve of the pulsation (the elastic
properties of the material prevail), while later it resembles to the curve
of the forcing function (the viscous properties of the material prevail).

\begin{acknowledgement}
This research is supported by the Serbian Ministry of Education and Science
projects $174005$ (TMA and DZ) and $174024$ (SP), as well as by the
Secretariat for Science of Vojvodina project $114-451-2167$ (DZ).
\end{acknowledgement}

\end{document}